\newtheorem{theorem}{Theorem}
\newcommand{\ournameNoSpace}{\emph{\mbox{DeepEclipse}}}
\newcommand{\ourname}{\ournameNoSpace\xspace}
\def\etal{\emph{et al}.\xspace}
\begin{document}

\date{}

\title{\Large \bf \ourname: How to Break White-Box DNN-Watermarking Schemes\\
  }

\author{
{\rm Alessandro Pegoraro}\\
Technical University of Darmstadt\\
alessandro.pegoraro@trust.tu-darmstadt.de
\and
{\rm Carlotta Segna}\\
Technical University of Darmstadt\\
carlotta.segna@trust.tu-darmstadt.de
\and
{\rm Kavita Kumari}\\
Technical University of Darmstadt\\
kavita.kumari@trust.tu-darmstadt.de
\and
{\rm Ahmad-Reza Sadeghi}\\
Technical University of Darmstadt\\
ahmad.sadeghi@trust.tu-darmstadt.de
}

\maketitle

\begin{abstract}

\noindent
Deep Learning (DL) models have become crucial in digital transformation, thus raising concerns about their intellectual property rights. Different watermarking techniques have been developed to protect Deep Neural Networks (DNNs) from IP infringement, creating a competitive field for DNN watermarking and removal methods.

\noindent
The predominant watermarking schemes use white-box techniques, which involve modifying weights by adding a unique signature to specific DNN layers. On the other hand, existing attacks on white-box watermarking usually require knowledge of the specific deployed watermarking scheme or access to the underlying data for further training and fine-tuning. 

\noindent
We propose \ourname, a novel and unified framework designed to remove white-box watermarks. We present obfuscation techniques that significantly differ from the existing white-box watermarking removal schemes. \ourname can evade watermark detection without prior knowledge of the underlying watermarking scheme, additional data, or training and fine-tuning. Our evaluation reveals that \ourname excels in breaking multiple white-box watermarking schemes, reducing watermark detection to random guessing while maintaining a similar model accuracy as the original one. Our framework showcases a promising solution to address the ongoing DNN watermark protection and removal challenges.

\end{abstract}

\section{Introduction}
\label{sec:intro}

The rising cost of computational and engineering expenditures associated with training massive Deep Neural Networks (DNN) models has reached unprecedented levels \cite{devlin2019bert, radford2019language, raffel2023exploring, openai2023gpt}. Given that well-trained DNNs are invaluable assets for AI corporations, they face a growing threat from model embezzlement and unauthorized usage \cite{tramer2016stealing, correia2018copycat, he2020towards}. Consequently, model copyright protection has become increasingly vital and studied, also by large corporations such as IBM \cite{zhang2018protecting}, Google \cite{adi2018turning}, and Microsoft \cite{chen2019deepattest, chen2020specmark}. Hence, recently, several approaches known as DNN watermarking have emerged \cite{darvish2019deepsigns, munyer2023deeptextmark, chen2019deepmarks, Uchida_2017, liu2021watermarking, wang2021riga, kirchenbauer2023watermark, feng2020watermarking, kuribayashi2021white, guan2020reversible, botta2021neunac, chen2019deepattest, li2022defending, sablayrolles2020radioactive, zhao2021structural, xie2021deepmark, chen2021you, xiaoxuan2021meets, fan2021deepip, zhang2020passport, lim2022protect, adi2018turning, zhang2018protecting, guo2018watermarking, maung2021piracy, zhu2020secure, lao2022identification, le2020adversarial, jia2021entangled, li2019prove, lao2022deepauth, li2022untargeted, zhong2020protecting, chen2019blackmarks, mehta2022aime, li2022move, charette2022cosine, namba2019robust, 9711105, wang2022nontransferable, bansal2022certified, 9954194, 9844282, atli2021waffle, li2022fedipr, shao2023fedtracker, 9989512, ong2021protecting, cong2022sslguard, 9505220}, aimed at tracking down illicit duplicates in the open domain \cite{lukas2022sok, sun2023deep}.

\noindent
\textbf{DNN Watermarking Schemes. }DNN watermarking methods can be categorized into two primary types: black-box and white-box watermarking. While the former requires only API access for model predictions, the latter demands access to the model's internal architecture. Both watermarking categories follow two essential phases: injection and verification. During training, a secret signature (i.e., the watermark) is incorporated into the target model. Subsequently, the verification is performed by extracting the signature from the model and then comparing it with the original one kept by its owner. The positive verification of the watermark's presence and authenticity shall prove the model's ownership. This verification can further prove to a third independent party, e.g., a court jury or a government agency, the reliability of the ownership's claim~\cite{sun2023deep}. 

\noindent
In this paper, our focus is on white-box watermarking schemes. Since white-box watermarks embed a signature into the model's parameters, they require access to the suspect model when verifying the ownership~\cite{darvish2019deepsigns, wang2021riga, Uchida_2017}. The watermark is, therefore, tied intricately to the model's internal structure and architecture. One distinguishing feature of the white-box watermarking is that, given the high dimensionality of neural networks, the probability of collision for two honest model owners is highly improbable since their models should have the same weight initialization and the same watermark parameters initialization to end up with the same extracted signature~\cite{darvish2019deepsigns}.
This unique property has directed significant research attention towards white-box model watermarking, highlighting its prominence in the field~\cite{lukas2022sok, sun2023deep}. 

\noindent \textbf{Attack on White-Box Watermarking. } The recent emergence of DNN watermarking schemes has given rise to watermark removal attacks, which can be classified into three subcategories~\cite{lukas2022sok}: (i) input preprocessing, (ii) model modification, and (iii) model extraction. However, as we elaborate in detail in Section \ref{sec:related_works}, these attacks have various limitations because they typically require access to (i) data for re-training, (ii) hardware capabilities, or (iii) knowledge of the embedded watermark. 

\noindent \textbf{Our Goals and Contributions. }We address the limitations of existing white-box watermark removal methods by proposing a novel unified white-box watermark obfuscation framework \ourname. Our attacks target the most studied types of white-box watermarking schemes, namely, weights-based~\cite{Uchida_2017, wang2021riga, liu2021watermarking, chen2021you, chen2019deepmarks, feng2020watermarking, kuribayashi2021white, xie2021deepmark, chen2019deepattest} and activation-based~\cite{guan2020reversible, darvish2019deepsigns, botta2021neunac, sablayrolles2020radioactive}. In a weight-based watermarking scheme, the model owner embeds a secret message into the weights of one or multiple layers during training. In contrast, in activation-based schemes, watermarks are characterized by embedding the message into the activation maps of layers for selected samples.

\noindent
We present two white-box watermark obfuscation methods, basic and advanced, associated with different adversarial settings. The basic attack assumes a setting where a passive verifier checks the credibility of the model provided by the owner. Passive verifiers receive the position of the watermarked layer(s) from the original model owner and strictly follow the protocol of the watermarking scheme to extract the message. For this setting, we introduce a novel obfuscation technique to neutralize common white-box watermarking defenses by altering the structure and weights of the model's layers through layer splitting and reshaping while preserving the model's utility. Thus, this restructuring of the model layers hinders the passive verifier from correctly extracting the watermark. In some instances (e.g., a trial), the disputing parties may be required to disclose more information about their models to the verifier. Thus, a more active verifier that has access to more information, such as the whole model, can be present for the watermark verification. As a result, this verifier can perform a set of computations on the model's parameters to effectively undo any possible obfuscation an adversary may have employed. The advanced obfuscation attack addresses this setting and is the most robust adversarial setting, where a third-party verifier typically requires access to data and parameters for watermark verification.

For the advanced obfuscation attack, we first identify possible watermarked layers using frequency analysis and then apply more substantial model modifications to make the identified layers noisy. In frequency analysis, we extract the frequency components of the weight matrix to study the patterns that help us discriminate the watermarked layers from the non-watermarked layers, as detailed in Section \ref{subsub:freq}. The rationale behind employing frequency analysis is to minimize the potential impact on the model's effectiveness. Thus, once we determine the watermarked layers, we do not have to apply advanced obfuscation techniques to all the layers. Our extensive evaluations show that even in the worst-case scenario, we reduce the accuracy by only 4\%.

\noindent
Our attacks are designed to work for DNNs that incorporate linear or convolutional layers, or a combination of both, and operate effectively without prior knowledge of the specific watermarking scheme used.

\noindent
In summary, our key contributions are as follows:
\begin{itemize}[noitemsep]
\item We propose a novel framework, \ourname, for obfuscating DNN watermarks. Our approach significantly differs from the existing white-box watermarking removal schemes since it can evade watermark detection without prior knowledge of the underlying watermarking scheme, additional data, and/or training or fine-tuning.

\item We introduce two obfuscation attacks within our framework, each tailored to different adversarial settings. The basic obfuscation involves splitting and reshaping the layers within a DNN, while the second one builds noisy layers on top of the first, specifically designed to counter active verifiers who possess access to more information than is typically required for watermark verification (cf. \S\ref{sec:design}).

\item To aid in identifying potential layers that white-box watermarking schemes may have modified, we employ a frequency-based analysis of the model weights using Discrete Fourier Transform (DFT). This analysis not only identifies potential alterations but also helps minimize the loss in utility when applying our advanced obfuscation attack (cf. \S\ref{sec:design}).

\item We extensively evaluate our approach against a diverse range of well-known white-box watermarking methods. The results demonstrate that \ourname effectively breaks multiple white-box watermarking schemes, reducing watermark detection to the level of random guessing while maintaining a similar accuracy to the original model (cf. \S\ref{sec:eval}).

\item Finally, we conduct a comprehensive evaluation of the applicability of \ourname across various model architectures and benchmark datasets. Our findings indicate that the base obfuscation algorithm has no discernible impact on the behaviour of stolen models. Additionally, our more advanced obfuscation attack, even in the worst-case scenario, only minimally impacts the accuracy of the targeted models (cf. \S\ref{sec:eval}).
\end{itemize}
\begin{figure*}[ht]
    \centering
    \includegraphics[scale=0.53]{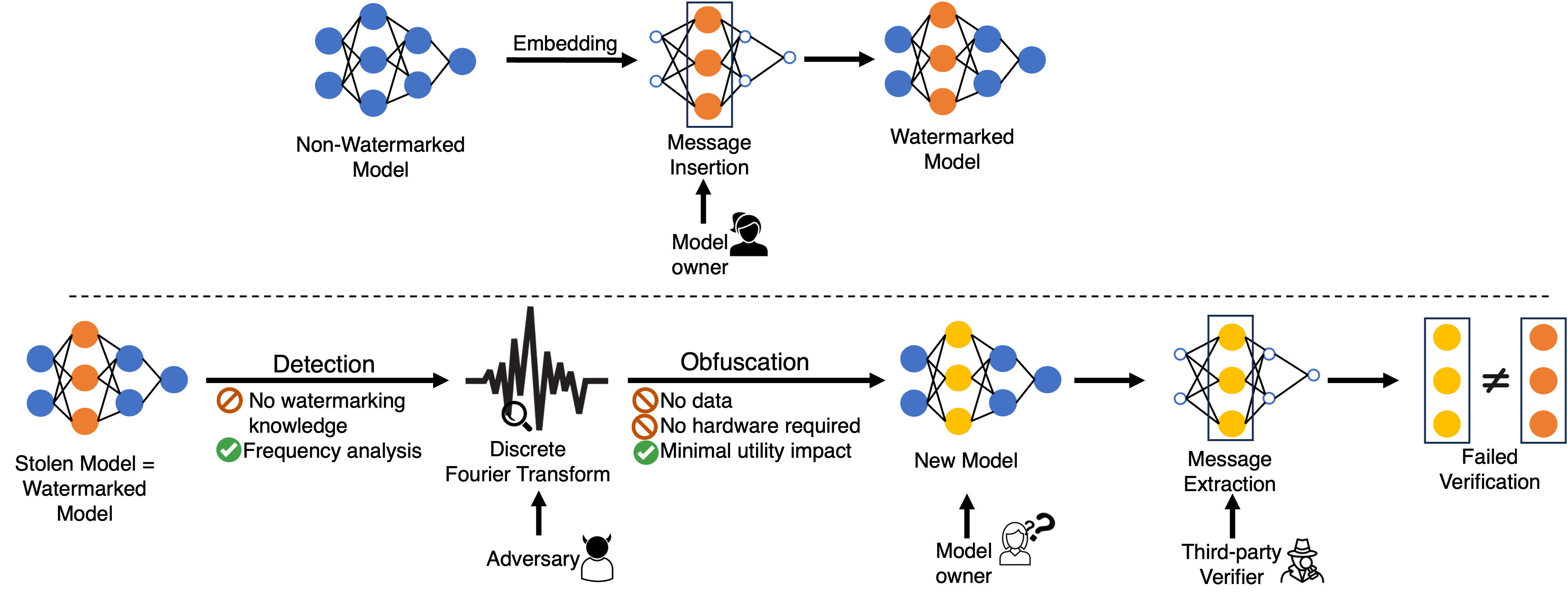}
    \caption{\ourname overview. Top: Represents the expected behavior of the model owner, i.e., watermark insertion. Bottom: Analyzed detection and proposed obfuscation pipeline, followed by the verification process. We have assumed an adversary who has acquired an unauthorized copy of the watermarked model (Stolen Model) and is trying to hinder the model's verification (done by a third-party verifier: passive or active) by executing the obfuscation techniques (base or advanced).}
    \label{fig:threat_model}
\end{figure*}

\section{Background}
\label{sec:background}

\noindent \textbf{Deep Neural Networks (DNNs). }A DNN can be expressed as a mathematical function denoted as $F(X; W)$, with $X$ representing the input data samples and $W$ denoting the network's parameters (comprising weights and biases). This network is structured into several layers, denoted as $F_i$ (where $i$ ranges from 1 to L). The first layer, known as the input layer, is labeled as $F_1$, while the final layer, termed the output layer, is designated as $F_L$. The intermediate layers are commonly referred to as hidden layers. In a feed-forward neural network, data moves in a unidirectional path, starting from the input layer, traversing through the hidden layers, and ultimately reaching the output layer. While this architecture is conventional for tasks like classification~\cite{simonyan2014very}, it is not unusual to encounter alternative architectural arrangements with different data flows~\cite{he2016deep}. Below, we briefly describe layers that have been extensively investigated in watermarking literature as the common choices for embedding watermarks.
\begin{itemize} [leftmargin=*]
    \item \textbf{Linear Layer. }A linear layer, often a fully connected or dense layer, is a fundamental element within a neural network responsible for executing a linear transformation on its input. This transformation involves taking an input vector $\overrightarrow{x} \in \mathbb{R}^{m}$, a weight matrix $A \in \mathbb{R}^{m \times n}$, and a bias vector $b \in \mathbb{R}^{n}$. The activation of the linear layer returns the output vector $\overrightarrow{x}^{'}$ with dimension $n$. In the training process, the values of $A$ and $b$ are modified/updated using backpropagation according to the patterns learned from the data. The linear transformation can be mathematically represented as:
    \begin{equation}
    \label{eq:linear_layer}
    F_{i}(\overrightarrow{x}) = \overrightarrow{x} A_{i}\;+\;b_i
    \end{equation}

    White-box watermarks can extract the message from linear layers' matrix $A$, either by checking the values of the weights itself \cite{wang2021riga}, by calculating statistical values such as its mean~\cite{chen2019deepattest}, Mean Square Error~\cite{feng2020watermarking}, Discrete Cosine Transform~\cite{kuribayashi2021white} or by evaluating the output fo the layer on specific values~\cite{darvish2019deepsigns}. Therefore, in this paper, we change the shape and values of matrix $A$ and vector $b$ to change the output and statistical values of the layer without incurring catastrophic results for the model, as detailed in Section \ref{sec:design}. 
    
    \item  \textbf{Convolutional Layer. }The convolutional layer is the base building block of convolutional neural networks (CNNs) utilized for feature extraction in image processing tasks. It performs a convolution operation on the input data to produce feature maps. Given an input signal $X$ composed of several input channels $C$ (for an input image, the channels are the 3 RGB colors), a learnable filter that is usually a square kernel $K$ of dimension $k_h \times k_w$ for each input planes, the convolution operation for a single new element at position $(\upsilon, \iota)$ of a singular feature map $p \in P$ is defined as follows:

    \begin{equation}
    \label{eq:conv_single_element}
    \begin{aligned}
    Conv_{i}(X, P, \upsilon, \iota) &= b^{P}_{i}+(X \star K^{P})(\upsilon,\iota) \\
    & =b^{P}_{i}+\sum_{\tau=0}^{C-1} \sum_{\varkappa=0}^{k_h-1} \sum_{j=0}^{k_w-1} X^{\tau}_{\upsilon + \varkappa, \iota + j}\cdot K^{P}_{\varkappa, j} 
    \end{aligned}
    \end{equation}
    
    Each element in the output feature maps undergoes iterations in the convolutional process. Convolutional layers typically consist of multiple kernels, each responsible for generating $P$ individual feature maps. These individual feature maps are then combined to form the layer's output. Thus, the convolutional layer's weights are represented as $Conv \in \mathbb{R}^{P \times C  \times k_h \times k_w}$, where $p$ depends on the number of output feature maps, and the shape $\mathbb{R}^{k_h \times k_w}$ (same for all kernels).
    Watermarking aims to insert a signature message into the kernels of specific channels within convolutional layers, as seen in the previous~\cite{Uchida_2017, chen2021you,zhao2021structural}. When extracting this signature message, it is crucial to maintain the positions of the altered weights and the statistical values of the kernel matrix. Therefore, in this work, our objective is to modify the shape and values of kernels $K$ without causing significant changes to the model, as discussed in Section \ref{sec:design}.
    
\end{itemize}

\noindent \textbf{Activation Functions. }An activation function is a mathematical operation used in various machine learning and computational models to introduce non-linearity to the output of a neuron or unit, allowing it to capture complex patterns and relationships in the data. One standard activation function of the hidden layers is the Rectified Linear Unit~\cite{agarap2018deep}, which truncates all negative values to zero. Another standard activation function that is used for the output layer is the Softmax~\cite{softmax_cit}, which transforms the output vector of $F_L$ into a probability distribution over multiple classes. \\

\noindent \textbf{Discrete Fourier Transform. }The Fourier Transform is a mathematical technique that converts a function into a representation highlighting the frequencies present in the original function. It achieves this by extracting the sine and cosine basis functions that comprise the function. The Fourier Transform can be used in both continuous (Continuous Fourier Transform, CFT) and discrete (Discrete Fourier Transform, DFT) scenarios. DFT, which includes information on both amplitude and phase. In this study, we employ the DFT~\cite{vetterli2014foundations} to break down the model's weights to analyze the pattern in the variance of changes in frequency values to distinguish the watermark from non-watermark layers, as detailed in Section \ref{sec:design}. \\

\noindent \textbf{White-box Watermarking. }A white-box watermarking scheme for DNNs is designed to transparently mark and distinguish a specific neural network model or its elements. Thus, allowing one to easily evaluate the model ownership by detecting the mark (signature message). Unlike black-box watermarks, which usually add marks to the model's output or predictions, white-box watermarks incorporate the signature message directly into the model's structure or parameters. This integration makes them more conspicuous and verifiable when interacting with the model. The white-box watermark process can be divided into multiple phases: $1)$ \textit{Integration into Model}, where the owner often adds the watermark into the model during its training process. The approaches (studied in this work) typically select convolutional layers~\cite{Uchida_2017, chen2021you,zhao2021structural} or linear layers~\cite{darvish2019deepsigns,botta2021neunac} to add the watermark. $2)$ \textit{Message extraction}, where the third-party verifier obtains a secret message from the model following the instruction of the model owner, and $3)$ \textit{Watermarking verification}, where the extracted signature is compared with the signature provided by the model owner claiming the theft. The owner's message and the extracted message ($M_o, M_e \in \{0,1\}$), each of length $N$, are compared using similarity metrics computed by adding the matching bits and normalizing the count relative to the message length.

\begin{equation}
\label{eq:verification}
\text{Sim}(M_o, M_e) = \frac{1}{N} \sum_{i=0}^{N} \begin{cases}
            1, &         \text{if } M_{o_i}=M_{e_i},\\
            0, &         \text{if } M_{o_i}\neq M_{e_i}.
    \end{cases}
\end{equation}

\noindent
A watermark is considered retained in a model if it is possible to successfully extract the same message from the model with a similarity score exceeding a certain threshold specified by the watermarking scheme. If the success rate falls below this threshold, we consider the watermark as removed. Importantly, watermarks should remain in surrogate models created based on the source model.

\noindent
In this paper, we focus on attacking these white-box watermark schemes by introducing obfuscation techniques to demonstrate how easily these defenses can be evaded. Thus effectively illustrating the urgent need to develop better watermark schemes to prevent the theft of intellectual property (IP) from DNN models.
\section{Threat Model}
\label{sec:threat}

\noindent \textbf{Adversary's Goals and Constraints. }We consider a distinct threat model concerning adversaries seeking to prevent watermark verification. Figure \ref{fig:threat_model} presents an overview that outlines the watermark obfuscation setup and the associated threat model. As illustrated in Figure \ref{fig:threat_model}, we consider an adversary who has acquired an unauthorized copy of the watermarked model and is trying to hinder the model's verification by executing the obfuscation techniques (base or advanced, as detailed in Section \ref{sec:intro}). The verification is implemented by a third-party verifier who can be classified as passive or active. Importantly, we assume that the adversary does not require (i) prior knowledge about injected watermarks, (ii) data availability for training and fine-tuning, or (iii) hardware capability for training and fine-tuning. Note that, as mentioned in Section \ref{sec:related_works}, existing attacks require access to an entire dataset for training or incur significant reductions in the model utility. 

\noindent As mentioned earlier, the adversary has acquired an unauthorized copy of a watermarked model, possessing complete knowledge of its model structure and parameters. Thus, to conceal any traces of model infringement, the adversary's objective is to obstruct the watermarking verification.
\noindent \textbf{Defense's Goal and Constraints.}To ensure a trustworthy ownership verification, a model watermark scheme should satisfy a minimum set of fundamental requirements, which are: (i) \textbf{Fidelity }ensure no degradation in neural network function due to watermarking, (ii) \textbf{Reliability }minimize false negatives in watermark detection, (iii) \textbf{Robustness }withstand model modifications without compromising watermark integrity, (iv) \textbf{Integrity }minimize false positive in watermark detection, (v) \textbf{Capacity }embed large information while meeting fidelity and reliability criteria, (vi) \textbf{Efficiency }keep overhead minimal for embedding and detection, (vii) \textbf{Security }ensure undetectable watermark presence in the network, (viii) \textbf{Generalizability }applicable in both white-box and black-box scenarios. \ourname focuses against \textbf{Reliability}, \textbf{Robustness }and \textbf{Security}, since the approach will maximize the false negative by modifying layers in which the watermark presence was detected.

\noindent
The watermarking scheme fails: (i) if the verifier, following the procedure outlined by the watermarking scheme, is unable to extract the secret message ($M_e$) from the stolen model, or (ii) the computed similarity (Equation Eq. \ref{eq:verification}) between the extracted secret message ($M_e$) and the owner's message ($M_o$) is less than a specified threshold ($\delta$), or it falls into the scenario of a random guess: $Sim(M_o,\; M_e) \leq 0.5 + \delta$.

\noindent \textbf{Verifier's Goal and Constraints. }We consider two types of verifiers: a passive verifier and an active verifier. A passive verifier receives the position of the watermarked layer(s) from the original model owner and strictly follows the protocol of the watermarking scheme to extract the message. When encountering a verification failure or not correctly extracting the watermark signature, the passive verifier halts the verification process. In this case, the attacker's goal is achieved, and the watermarking is broken. \\
An active verifier can access more information beyond what is needed for standard verification protocol (passive verifier). In this setting, the verifier can inspect the layers' parameters, trying to undo any possible obfuscation the adversary may have deployed. We analyze how \ourname's advanced obfuscation technique (used on top of the base obfuscation) can effectively prevent any detection while causing a minimal drop in the model utility (Section \ref{sec:design}).

\section{Design}
\label{sec:design}
This section explains \ourname design, which provides a detailed execution overview, highlighting the insights that enhance its effectiveness in bypassing an active verifier.
\subsection{High-Level Idea}
\label{susbec:hli}
We propose two obfuscation techniques: (1) The Base obfuscation attack and (2) The Advanced obfuscation attack. Since we focus on the most prevalent architectures to implement our attack scheme (discussed in Section \ref{sec:intro}), we concentrate on the \textit{Linear} and \textit{Convolutional} layers, which are the primary targets for white-box watermarking schemes to embed a watermark. Thus, this approach is applicable to the classes of watermarking where the defense embeds the watermarks into layers involving matrix multiplication, and the signature of the watermark is present on referential transparent layers.

\noindent
\textbf{Linear Layers. }The Base obfuscation attack alters the structure and weights ($A \in \mathbb{R}^{m \times n}$) of a model's layers ($F_{i}$) through layer splitting (using the Identity matrix, $I_{n \times n}$, as shown in Figure~\ref{fig:base_linear}), thus expanding the architecture, while preserving the model's utility. Consequently, we construct two new layers with weights and shapes incompatible with the original watermarked layer while maintaining the same behavior of the stolen model's architecture. We still consider the possibility of an active verifier checking whether the output of the aggregated layers is the same as the original watermarked one~\cite{darvish2019deepsigns, chen2019deepattest}. To falsify this verification, we apply further modifications. Specifically, we split the watermarked layer into two new layers as before and combined the second one with the following layer (i.e., the first layer after the watermarked one in the original architecture), as depicted in Figure~\ref{fig:adv_linear}. This way, we implement both base and advanced obfuscation attacks for linear layers. 

\noindent \textbf{Convolutional Layers. }In the Base obfuscation attack for the \textit{Convolutional} layer, our goal is to manipulate the filters' shape ($k_h \times k_w$) of the \textit{Convolutional} layer ($Conv_{i}$) and its values without modifying its outputs. Hence evading detection from a passive verifier. In the Advanced Obfuscation Attack, we first identify the possible watermarked layers using frequency analysis of the model weights to limit the potential drop in the model's utility. To accomplish this, we employ the Discrete Fourier Transform (Section \ref{sec:background}) to extract the frequency components present in the model weights. Our analysis reveals that the average change in the frequency values of the watermarked layers is less volatile (or less fluctuations in the values) than the weights of the non-watermarked layers. Section \ref{subsec:detail_design} and \ref{subsub:freq} provides a more detailed explanation. After identifying the potential watermarked layers, we aim to apply additional perturbation to the kernel's frame after implementing the base scheme. Thus, we first draw a random noise from normal distributions (details in Section \ref{subsec:detail_design}) and incorporate it into the values of the kernel's outer border, as portrayed in Figure~\ref{fig:adv_conv}.

\noindent
In summary, to obfuscate the watermarked layers, we perform splitting and reshaping for the \textit{Linear} layers and perform splitting and noising for the \textit{Convolutional} layers.

\subsection{Detailed Design}
\label{subsec:detail_design}
To simplify the analysis, we first provide details of the detection phase and then describe the operation of the two obfuscation techniques for both the linear and the convolutional layers.

\noindent \textbf{Detecting Watermarked Layers. }As detailed in Section \ref{sec:background}, we use DFT of the weight matrix ($A$) to extract the frequency components consisting $R_{r}^{r}$ and $R_{r}^{i}$, where $R_{r}^{r}$ denotes the real part and $R_{r}^{i}$ denotes the imaginary part of the frequency amplitude. Our goal is to examine these frequency components' average rate of change, enabling us to differentiate between watermark and non-watermark layers. Upon analysis of either $R_{r}^{r}$ or $R_{r}^{i}$, we observed that the rate of change in the frequency values for watermarked layers is notably different from non-watermarked ones, as shown in Section \ref{sec:eval}.

\noindent
Thus, we employ Simple Moving Average (SMA), a well-established technique for identifying trends or patterns in data. SMA calculates the average value of a data series within a specified moving window of data points and shifts this window across the data to create a new set of averaged values. Our approach begins by considering the input frequency vector, either $R_{r}^{r}$ or $R_{r}^{i}$, of length $f$. Then, we apply the SMA formula to assess the average rate of change in these frequency components within a window of size $r$, as described below:  

\begin{equation}
    SMA(l) = \frac{R_{r}^{i}(l) + R_{r}^{i}(l-1) + \ldots + R_{r}^{i}(l-r+1)}{r} 
\end{equation}

\noindent
where $R_{r}^{i}(l)$ represents it's value at index $l$. \\

\noindent
Next, we analyze the core assumptions or working of the watermarking schemes (discussed in Section \ref{sec:intro}) regarding the embedding and extracting of secret messages. The motivation here is to decide on different attack strategies to help remove the watermark from the models. We do this analysis for both the \textit{Linear} and \textit{Convolutional} layers.\\

\begin{figure}[!t]
    \centering
    \includegraphics[scale=0.43]{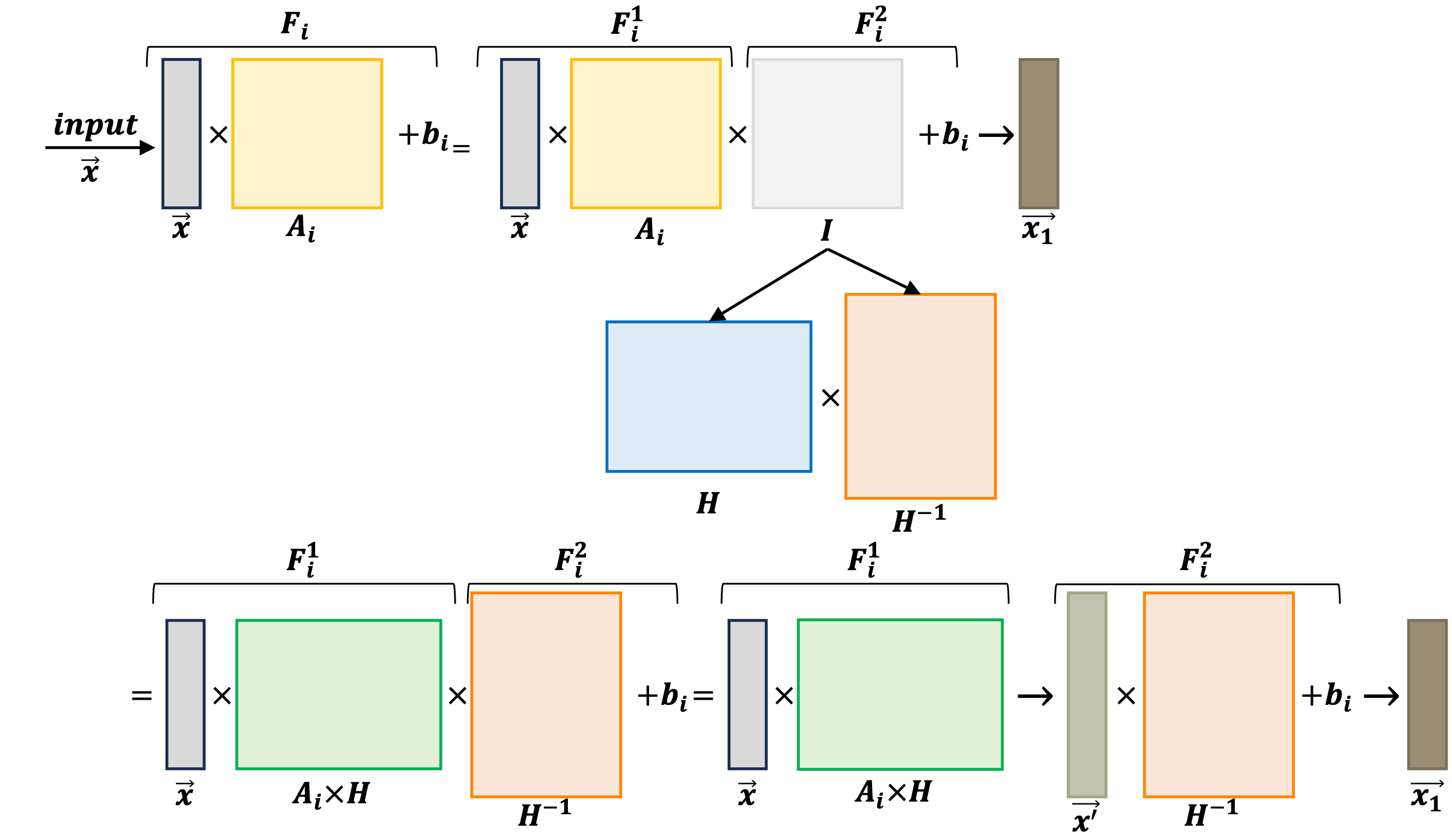}
    \caption{Basic \textit{Linear} Layers obfuscation. The original layer is split into two, with the first layer being the matrix multiplication between the original layer and a random matrix and the second layer being the inverse of the random matrix.}
    \label{fig:base_linear}
\end{figure}

\begin{figure*}[!t]
    \centering
    \includegraphics[scale=0.63]{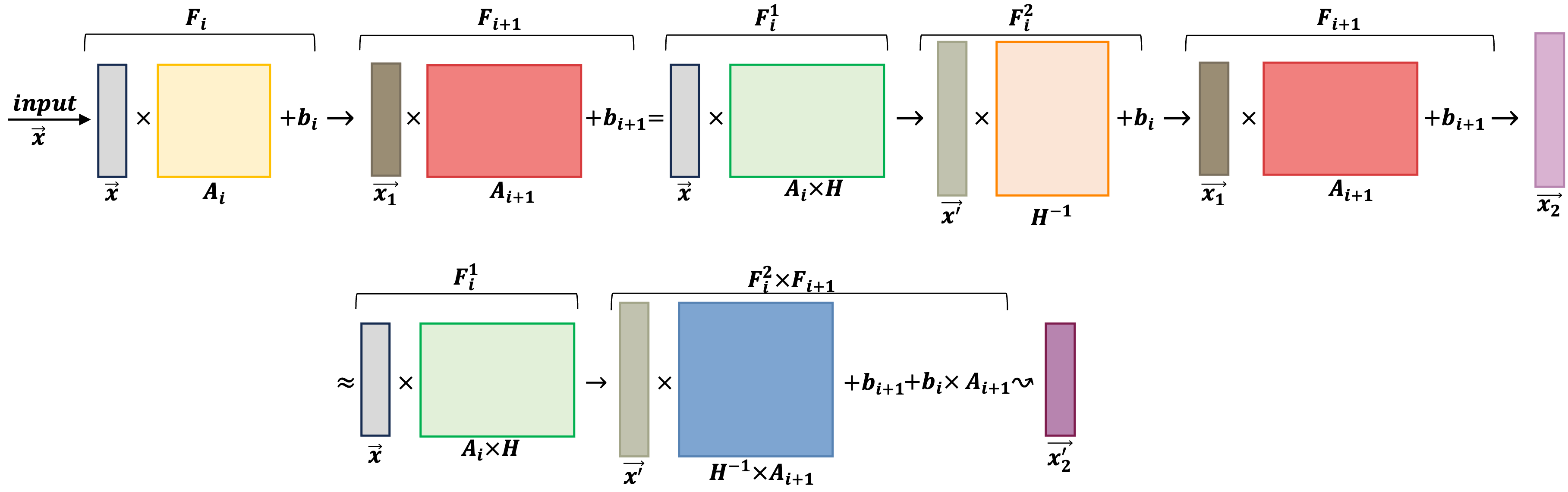}
    \caption{Advanced \textit{Linear} Layers obfuscation. The watermarked layer is multiplied by a random matrix, and the subsequent layer is multiplied by the inverse of the random matrix. The subsequent Bias is updated with the original watermarked Bias.}
    \label{fig:adv_linear}
\end{figure*}

\noindent \textbf{Linear layers. }A generic watermark defense framework needs to keep track of the original layer's weight values and their position in the $\mathbb{R}^{m \times n}$ matrix~\cite{liu2021watermarking, chen2019deepmarks, feng2020watermarking, kuribayashi2021white, xie2021deepmark, botta2021neunac} during the embedding and extracting of a message in one or multiple \textit{Linear} layers. Hence, some defenses work to address this limitation by taking an alternative approach. Such as, they employ one of two strategies: either train an extra deep neural network (DNN) to extract the hidden message from the layer's weights~\cite{sablayrolles2020radioactive} or analyze the activation maps of the layer for specific input instances~\cite{darvish2019deepsigns, chen2019deepattest}. Hence, it becomes apparent that the attack on these defenses can be divided into three categories: (i) attacks targeting the model weights, (ii) attacks targeting the DNN-based approaches, and (iii) attacks targeting the activation maps. Consequently, the principle method to remove the watermark can also be divided into three categories: first, modify the weight's matrix; second, impede the DNN-based approaches to change the shape of the weight's matrix (such that the DNN is no longer able to take them as input), and lastly, replace the output shape of the layer. Below, we propose a scheme encompassing all the above categories to fool the active verifier. \\

\noindent\textbf{Base Linear Layer Obfuscation. }Given the input vector $\overrightarrow{x}$ to the watermarked layer $F_{i}$ with weights matrix $A_{i}$, we conduct the following transformation and execute the obfuscation techniques on it to evade watermark verification by the passive verifier. Figure \ref{fig:base_linear} illustrates the Base obfuscation attack for the \textit{Linear} layer $F_{i}$. Our motivation for executing this obfuscation is to introduce an Identity matrix ($I_{n \times n} = HH^{-1}$) to transform the characteristic Equation (Equation \ref{eq:linear_layer}) of \textit{Linear} layer $F_{i}$ into two new layers: $F^{1}_{i}$ and $F^{2}_{i}$, as shown in Figure~\ref{fig:base_linear}. For a passive verifier, both will appear as random layers with no relation to the original $F_{i}$. Therefore, any signature $M_e$ extracted from either of them will not be similar or comparable with the owner secret message $M_o$. The new layers are given by:
\begin{equation}
\label{eq:identity_split}
\begin{aligned}
F^{1}_{i}(\overrightarrow{x}) &=\; \overrightarrow{x} \times A_{i}\\
F^{2}_{i}(F^{1}_{i}) &=\; F^{1}_{i}(\overrightarrow{x}) \times I_{n \times n} +\;b_i
\end{aligned}
\end{equation}

\noindent
Notably, this does not change the behaviour of the whole model, as the aggregated behaviour of the two layers is identical to the original one, $F_{i}(\overrightarrow{x}) = F^{2}_{i}(F^{1}_{i}(\overrightarrow{x}))$. For example, the original vector $\overrightarrow{x}$ is first computed with layer $F^{1}_{i}$. Then, the result is given as input to layer $F^{2}_{i}$, as shown in Equation~\ref{eq:linear_layer}. Due to the introduction of $H$ and $H^{-1}$, which can be of any size and content, $F^{1}_{i}$ and $F^{2}_{i}$ have custom shapes and parameters that appear random to a third-party passive verifier.

\noindent
However, finding an $H$ of any shape such that $HH^{-1} = I_{n \times n}$ can appear challenging, as it does not consider the case for a non-square (or rectangular) matrix $A_{i}$. The reason is that it is not prevalent in the literature that the multiplication of a rectangular matrix with its inverse is equal to an identity matrix (rectangularity problem).

\noindent
Hence, to tackle the problem of changing the values and shape of the layer's weights while accounting for the problem of rectangularity, we proceed as follows: first, we construct two rectangular matrices $H$ and its inverse $H^{-1}$ such that: (i) $H \times H^{-1} = I_{n \times n}$ and (ii) $A_{i}$ is dimensionally compatible with $H$ (i.e., we can multiply $A_{i}$ with $H$). Then, we can assign the product $A_{i} \times H$ to $F^{1}_{i}$ and $H^{-1}$ to $F^{2}_{i}$. Consequently, as per the modification depicted in Figure \ref{fig:base_linear}, from $F_{i}$ we obtain two new layers $F^{1}_{i}$ and $F^{2}_{i}$, with weights and shapes incompatible with the original $F_{i}$. These two layers maintain the same behaviour of $F_{i}$ when substituted in sequence to the architecture of the stolen model.

\noindent
Now, the problem reduces to proving that $H$ and $H^{-1}$ exist, such that the size of $H$ becomes irrelevant, and we can obtain infinitely many different such matrices with the desired characteristics. Below, we provide a formal proof that multiplying a rectangular matrix with its inverse can result in an identity matrix.

\begin{theorem}
\label{thm:rectangularity}
$H \times H^{-1} = I_{n \times n}$ and $A_{i}$ is dimensionally compatible with $H \in \mathbb{R}^{n \times h}$, when $h > n$ and $rank(H) = n$. 
\end{theorem}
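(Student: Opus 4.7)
The plan is to construct $H^{-1}$ explicitly as a right inverse via the standard pseudoinverse formula and then verify the two required properties (the identity relation and the dimensional compatibility with $A_i$). Concretely, given $H \in \mathbb{R}^{n \times h}$ with $h > n$ and $\text{rank}(H) = n$, I would define
\begin{equation}
H^{-1} \;:=\; H^{\top}\bigl(H H^{\top}\bigr)^{-1} \;\in\; \mathbb{R}^{h \times n},
\end{equation}
and then show that $H \times H^{-1} = I_{n \times n}$ by direct computation: $H \cdot H^{\top}(HH^{\top})^{-1} = (HH^{\top})(HH^{\top})^{-1} = I_{n \times n}$.

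The key linear-algebra fact I need is that $HH^{\top} \in \mathbb{R}^{n \times n}$ is invertible under the hypothesis $\text{rank}(H) = n$. I would justify this in one short step: $HH^{\top}$ is symmetric and, for any nonzero $v \in \mathbb{R}^{n}$, one has $v^{\top}HH^{\top}v = \|H^{\top}v\|_{2}^{2} \geq 0$, with equality iff $H^{\top}v = 0$; since $H$ has full row rank $n$, $H^{\top}$ has full column rank, so $H^{\top}v = 0$ forces $v = 0$. Hence $HH^{\top}$ is symmetric positive definite and therefore invertible, making the construction well-defined.

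For the dimensional compatibility clause, since $A_i \in \mathbb{R}^{m \times n}$ (as introduced in Equation~\ref{eq:linear_layer}) and $H \in \mathbb{R}^{n \times h}$, the product $A_i \times H$ is well-defined and lies in $\mathbb{R}^{m \times h}$, which is exactly the shape needed to feed into the subsequent layer whose weights are $H^{-1} \in \mathbb{R}^{h \times n}$. I would also note that because the only binding constraints are $h > n$ and $\text{rank}(H) = n$, the adversary can pick $h$ arbitrarily large and sample $H$ from essentially any distribution supported on full-row-rank matrices, yielding an infinite family of valid splits $(A_i H,\; H^{-1})$ whose shapes and entries bear no apparent relation to the original $A_i$.

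The main obstacle is essentially conceptual rather than technical: conventional textbook statements only discuss inverses of square matrices, which is why the paper calls this the ``rectangularity problem.'' Once one recognizes that the right object is the right inverse (equivalently, the Moore--Penrose pseudoinverse on this side), the verification is mechanical. I would therefore keep the proof deliberately short, centering it on the positive-definiteness of $HH^{\top}$, and close by remarking that the same scheme breaks down precisely when $h \leq n$ or $\text{rank}(H) < n$, which is why both hypotheses appear in the statement.
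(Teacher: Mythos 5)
Your proposal is correct and follows essentially the same route as the paper: both define $H^{-1} := H^{\top}(HH^{\top})^{-1}$ and verify $H H^{-1} = (HH^{\top})(HH^{\top})^{-1} = I_{n\times n}$ by associativity. Your argument is in fact slightly more complete, since you explicitly establish the invertibility of $HH^{\top}$ via positive definiteness under the full-row-rank hypothesis, whereas the paper merely asserts it by appeal to the ``nullity zero'' property.
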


\begin{proof}
Given an arbitrary rectangular matrix $H \in \mathbb{R}^{n \times h}$ with $h > n$ and it's transpose $H^{T}$, such that $HH^T = I_{n \times n}$. We assume that the matrix $H$ has the nullity zero property~\cite{meyer2023matrix}. This property of a matrix is associated with matrices that have "full rank" or are "invertible". The "full rank" property of a matrix is a characteristic that indicates that the matrix has linearly independent columns ($rank(H) = n$), and "invertible" means that a square matrix has an inverse matrix. Under these conditions, the square matrix $HH^T$, of shape $n \times n$, is invertible, such that,

\begin{equation}
\label{eq:HHT_invertible}
(HH^T) \times (HH^T)^{-1} = I_{n \times n}
\end{equation}

\noindent
Then, due to the associative property of matrix multiplication, we can modify Equation \ref{eq:HHT_invertible} and replace it by:

\begin{equation}
\label{eq:almost_H_inv}
H[H^T \times (HH^T)^{-1}] = I_{n \times n}
\end{equation}

\noindent
Thus, $H^T \times (HH^T)^{-1}$ can be replaced by $H^{-1}$. Consequently, we have found two matrices $H \in \mathbb{R}^{n \times h}$ and $H^{-1} \in \mathbb{R}^{h \times n}$ such that \mbox{$H \times H^{-1} = I_{n \times n}$} with the only constrain of (i) choosing $h$ such that \mbox{$h > n$} and (ii) the square matrix $HH^T$ must be invertible.
\end{proof}

\noindent
In summary, the constraints for finding $H$ and $H^{-1}$ becomes: (i) $h > n$ and (ii) $rank(H) = n$. It allows us to generate infinitely many random matrices, and, as shown in Figure \ref{fig:base_linear}, finally updating Equation \ref{eq:identity_split} into:

\begin{equation}
\label{eq:base_obf}
\begin{aligned}
F^{1}_{i}(\overrightarrow{x}) &=\; \overrightarrow{x} \times (A_{i} \times H)\\
F^{2}_{i}(F^{1}_{i}) &=\; F^{1}_{i}(\overrightarrow{x}) \times H^{-1} +\;b_i
\end{aligned}
\end{equation}

\noindent
Finally, we can substitute $F_{i}(\overrightarrow{x})$ with the sequence of the transformed two new layers, $F^{1}_{i}$ and $F^{2}_{i}$. $F^{1}_{i}$ can either have no bias or a bias with an imperceptible value because it adds no perturbations to the output. Further, if we choose a matrix $H$ such that $A_{i} \times H$ has only positive values, and $F_{i-1}$ uses an activation function such as $\text{ReLU}$, we can further make $F^{1}_{i}$ inconspicuous to a passive verifier by applying $\text{ReLU}(F^{1}_{i}(\overrightarrow{x}))$ after $F^{1}_{i}$. However, an active verifier (see Section \ref{sec:threat}) may undo the obfuscation if $H^{-1}$ is multiplied with $F^{2}_{i}$, as showcased in Figure \ref{fig:base_linear}. Hence, we propose an advanced layer obfuscation technique for the linear layers to evade the active verifier. We apply further modifications to the base method by keeping the first constructed layer $F^1_{i}$ and aggregating the second one $F^2_{i}$ with the subsequent \textit{Linear} layer $F_{i+1}$ if it exists. Otherwise, we can easily generate a new one using $H$ and $H^{-1}$. This way, the perceived outputs will only be the one from $F^{1}_{i}$ and $F^{2}_{i} \times F_{i+1}$, as shown in Figure~\ref{fig:adv_linear}. 

\noindent \textbf{Advanced Linear Layer Obfuscation. }To deceive the active watermark verifier that examines the output of the \textit{Linear} layers and is more robust in other adversarial scenarios, where it can employ complex computations (Section \ref{sec:threat}), which includes using multiplication to reverse the splitting of $F_{i}$ into $F^{1}_{i}, \; F^{2}_{i}$ presented above. To deceive such an active verifier, it is imperative to devise an advanced obfuscation method. To do so, as shown in Figure \ref{fig:adv_linear}, \ourname aims to conceal $F^{2}_{i}$ within the \textit{Linear} layer following $F_{i}$, denoted as $F_{i+1}$. Thus, we keep the first layer from the base obfuscation: $F^{1}_{i}$, and we substitute $F_{i+1}$ with $F_{i+1} \times F^{2}_{i}$, as shown in Equation~\ref{eq:advanced_linear}. As we have shown before for the passive verifier, the active verifier extracts a signature $M_e$ from $F_{i+1}$ that is incompatible with the $M_o$ (that the owner extracts from $A_{i}$), further, even the signature extracted from $F_{i+1} \times F^{2}_{i}$ is of no use to the active verifier. Based on the structural properties of DNNs, it is guaranteed that $A_{i}$ must be dimensionally compatible with $A_{i+1}$. We also update the bias $b_{i+1}$ by appending the bias $b_{i}$ from the now hidden layer $F^{2}_{i}$. If the original layer $F_{i}$ made use of any activation function such as $\text{ReLU}$, then the new layer $F^{'}_{i+1}$ does not maintain the soundness that we guarantee in our base obfuscation approach, because the intermediate activation function is skipped. Nonetheless, as we will show in Section \ref{sec:eval}, our obfuscation incurs minimal utility loss.
\begin{equation}
\label{eq:advanced_linear}
\begin{aligned}
F^{'}_{i+1}(\overrightarrow{x}) &=\; \overrightarrow{x} \times A_{i+1} +\;b_{i+1} \\
F^{'}_{i+1}(\overrightarrow{x}) &=\; \overrightarrow{x} \times (H^{-1} \times A_{i+1}) +\;(b_{i} \times A_{i+1}) +\;b_{i+1}
\end{aligned}
\end{equation}

\begin{figure*}[h!t]
    \centering
    \includegraphics[scale=0.7]{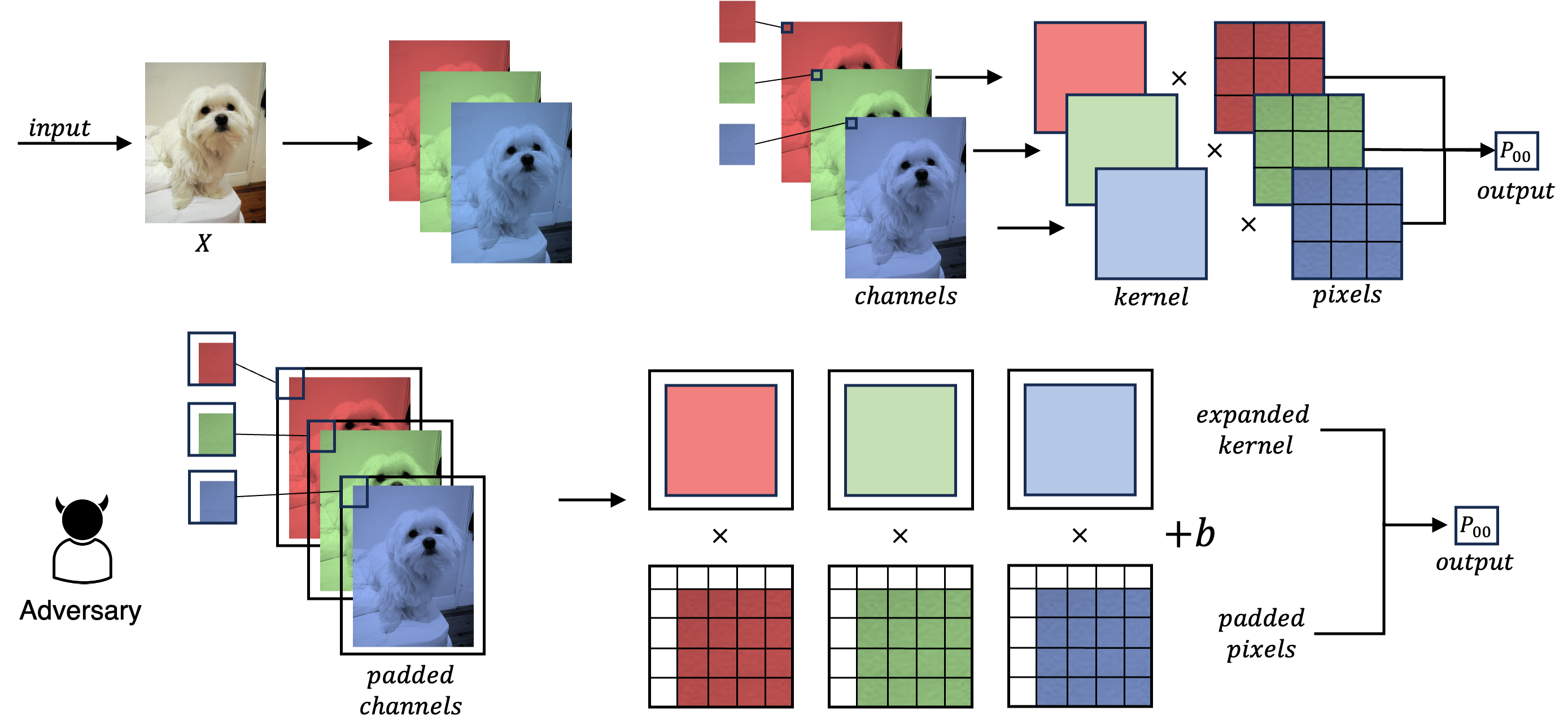}
    \caption{Basic \textit{Convolutional} Layers obfuscation. Each feature maps of the Kernel is expanded with zeros padding.}
    \label{fig:base_conv}
\end{figure*}

\begin{figure}[!t]
    \centering
    \includegraphics[scale=0.55]{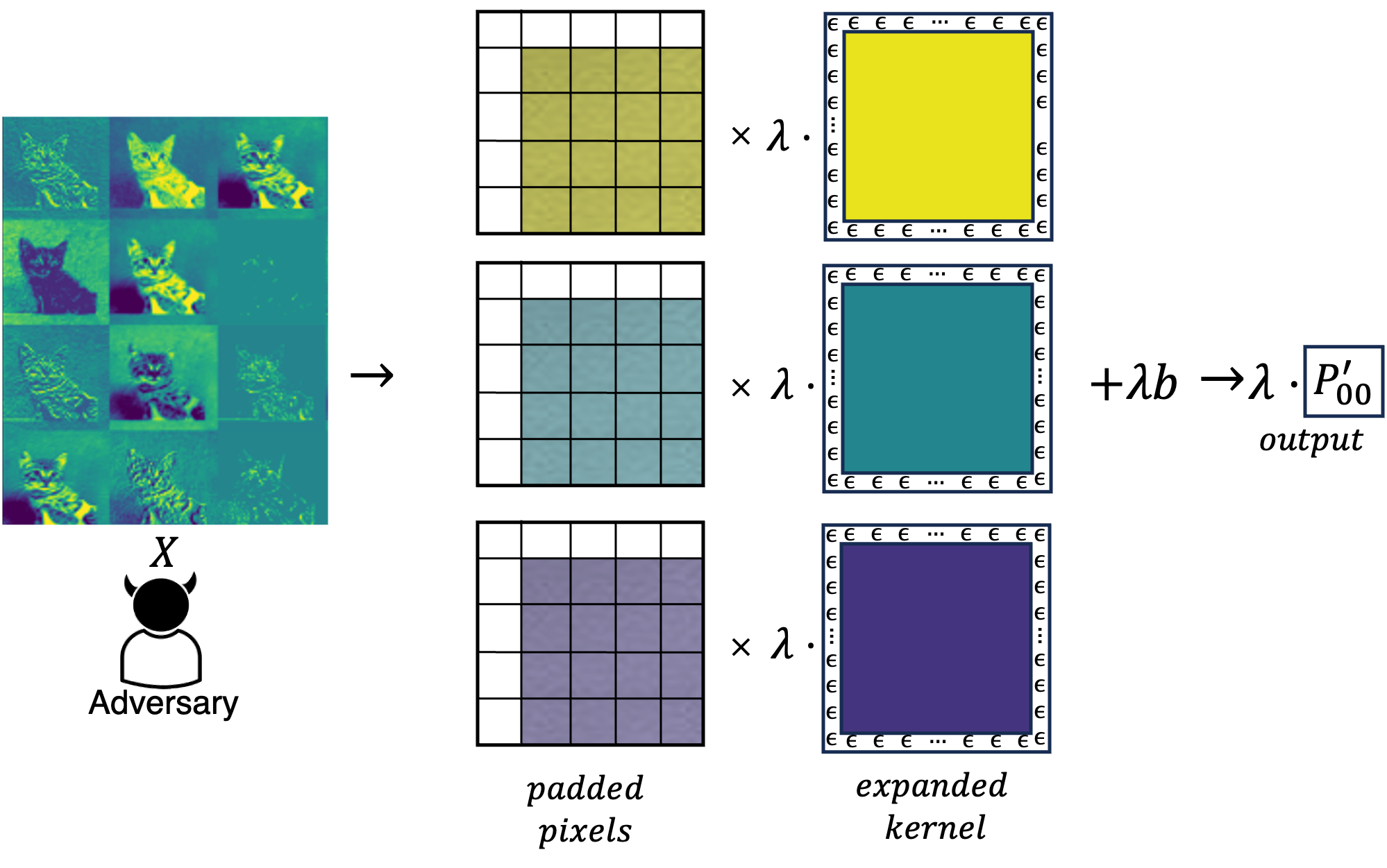}
    \caption{Advanced \textit{Convolutional} Layers obfuscation. Each feature maps of the Kernel is expanded with padding using an $\epsilon$ value, then the whole layer is multiplied by a random constant $\lambda$, and the subsequent layer is also multiplied by $\frac{1}{\lambda}$.}
    \label{fig:adv_conv}
\end{figure}

\noindent \textbf{Convolutional Layers. }After examining the defenses for DNN watermarking, it becomes apparent that a generic watermark defense framework aiming to embed and extract a message within one or multiple convolutional layers should consider various factors. These factors include the number of channels and the shape of the kernel ($\mathbb{R}^{C \times k_h \times k_w}$)~\cite{Uchida_2017, liu2021watermarking,xie2021deepmark, chen2021you, botta2021neunac, chen2019deepmarks}, the location of the selected weights within the filters~\cite{feng2020watermarking}, and the statistical values extracted from the filter's weights~\cite{xie2021deepmark, chen2021you, botta2021neunac}. Specific approaches address these vulnerabilities differently. Instead of directly encoding the message into the weights, they opt for either training an additional DNN to extract the hidden message from the layer's weights~\cite{wang2021riga} or evaluate the values of the output feature maps $p$ (Section \ref{sec:background}) for particular input signals~\cite{chen2019deepattest, guan2020reversible}. 
\noindent Hence, the attack on these defenses can be divided into three categories: (i) attacks targeting the kernel parameters (shape, location, and values), (ii) attacks targeting the DNN-based approaches, and (iii) attacks targeting the output feature maps. Consequently, the principle method to remove the watermark can also be divided into three categories: first, modify the kernel; second, impede the DNN-based approaches; and lastly, replace the output feature map. Below, we propose a scheme encompassing all the above categories to fool the active verifier. 

\noindent \textbf{Base Convolutional Layer Obfuscation. }In this passage, we target watermarking methods that embed their signature within the \textit{Convolutional} layers denoted as $Conv_{i}$. We formally prove how to manipulate the filters' shape of the \textit{Convolutional} layer and values without modifying its outputs, evading detection from passive verifiers.

\begin{theorem}
\label{thm:base_conv}
\textit{Convolutional} layer, $Conv_{i}$, filters' shape \mbox{($k_h \times k_w$)}, and values (at location $(\upsilon, \iota)$) can be manipulated without modifying their outputs.
\end{theorem}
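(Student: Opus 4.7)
The plan is to enlarge the kernel $K^{P}$ by appending a frame of zeros around its outer border, embedding the original $k_h \times k_w$ block into the interior of a larger $(k_h + 2)\times(k_w + 2)$ filter, and to compensate by zero-padding the input signal $X$ by a matching amount so that the convolution at every spatial location $(\upsilon, \iota)$ is left unchanged. This simultaneously (i) reshapes the filter, and (ii) rewrites the entries at every absolute kernel coordinate, since the border coordinates now carry zeros and the interior coordinates carry the original weights under a fixed shift; either effect is sufficient to mislead a passive extractor that keys on the shape of $K^{P}$ or on weight values at specific locations.

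First I would define the new kernel by
\[
K'^{P}_{\varkappa, j} \;=\;
\begin{cases}
K^{P}_{\varkappa-1,\, j-1}, & 1 \le \varkappa \le k_h,\; 1 \le j \le k_w,\\
0, & \text{otherwise},
\end{cases}
\]
and the correspondingly padded input as $X'^{\tau}_{\upsilon, \iota} = X^{\tau}_{\upsilon-1,\,\iota-1}$ inside the original spatial support and $0$ on the added frame. Then I would substitute $K'$ and $X'$ into Equation~\ref{eq:conv_single_element}: every summand whose kernel factor lies on the added frame vanishes because $K'^{P} = 0$ there, and after the change of summation variables $\varkappa \mapsto \varkappa + 1$, $j \mapsto j + 1$ the surviving terms are exactly $\sum_{\tau} \sum_{\varkappa} \sum_{j} X^{\tau}_{\upsilon+\varkappa,\,\iota+j}\, K^{P}_{\varkappa, j}$. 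Leaving the bias $b^{P}_{i}$ untouched yields $Conv'_{i}(X', P, \upsilon, \iota) = Conv_{i}(X, P, \upsilon, \iota)$ for every output index, so the layer is functionally unchanged.

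The main obstacle I anticipate is the bookkeeping at the spatial boundary of the output feature map: the starting index of the padded convolution has to shift by the same amount as the kernel, and the padding convention of the original layer (valid, same, or explicit) must be reused consistently so that the effective receptive field still aligns with the original output grid. Once this alignment is pinned down, the rest of the argument reduces to elementary re-indexing. To close the statement I would then remark that the construction is nontrivial with respect to watermark extraction: the coordinates previously holding signature bits are now zero in $K'$, the reported kernel shape is strictly larger, and any statistic taken over the enlarged support (mean, norm, DCT coefficients, etc.) is perturbed, so the same output-preserving transformation simultaneously defeats the three families of white-box defenses identified for convolutional layers, as required by Theorem~\ref{thm:base_conv}.
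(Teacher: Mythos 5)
Your proposal is correct and follows essentially the same route as the paper's proof: pad the kernel $K^{P}$ with a frame of zeros, pad the input $X$ by a matching amount, and observe that the border summands vanish so the expanded sum collapses back to $Conv_{i}(X,P,\upsilon,\iota)$. If anything, your explicit re-indexing ($\varkappa \mapsto \varkappa+1$, $j \mapsto j+1$) and your remark about aligning the output grid make the boundary bookkeeping cleaner than the paper's version, which uses negative indices $\varkappa,j=-1$ and leaves that alignment implicit.
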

\begin{proof}
In this proof, we aim to extend the kernel filter (Equation \ref{eq:conv_single_element}). As illustrated in Figure \ref{fig:base_conv}, one can observe that this extension is achieved by padding the initial kernel ($Conv_{i}$) with a frame of zeros (or values that vanish imperceptibly) with no impact on the layer's output. Although the frame's dimensions can vary, we will assume a frame of $-1$ on all sides for this design explanation. Still, any value, even rectangular frames, is acceptable, as non-regular shapes further help hide the obfuscation from an active verifier. 

\noindent
Then, given input matrix $X$, input channels $C$, output feature map $P$, we ensure that the behaviour of the obfuscated $Conv^{'}_{i}$ remains the same for all output feature map $P$ elements. Hence, we also consider additional padding for the input channels $C$, which must follow the same frame as the one put on the kernel (Figure \ref{fig:base_conv}). After applying these modifications, Equation \ref{eq:conv_single_element} for a single element at position ($\upsilon$, $\iota$) becomes:    

\begin{equation}
\begin{split}
\label{eq:conv_expanded}
Conv^{'}_{i}(X, P, \upsilon, \iota) = b^{P}_{i}+\sum_{\tau=0}^{C-1} \sum_{\varkappa=-1}^{k_h+1} \sum_{j=-1}^{k_w+1} X^{\tau}_{\upsilon + \varkappa, \iota + j}\cdot K^{P}_{\varkappa, j}
\end{split}
\end{equation}
\begin{multline*}
= b^{P}_{i}+\sum_{\tau=0}^{C-1} X^{\tau}_{\upsilon -1, \iota -1}\cdot K^{P}_{-1, -1}+...+X^{\tau}_{\upsilon, \iota}\cdot K^{P}_{0, 0}+...\\
...+X^{\tau}_{\upsilon + k_h -1, \iota + k_w -1}\cdot K^{P}_{k_h-1, k_w-1}+...+X^{\tau}_{\upsilon + k_h, \iota + k_w}\cdot K^{P}_{k_h, k_w}
\end{multline*}
\begin{multline*}
= b^{P}_{i}+\sum_{\tau=0}^{C-1} X^{\tau}_{\upsilon -1, \iota -1}\cdot 0+...+X^{\tau}_{\upsilon, \iota}\cdot K^{P}_{0, 0}+...\\
...+X^{\tau}_{\upsilon + k_h -1, \iota + k_w -1}\cdot K^{P}_{k_h-1, k_w-1}+...+X^{\tau}_{\upsilon + k_h, \iota + k_w}\cdot 0
\end{multline*}
\begin{equation*}
= b^{P}_{i}+\sum_{\tau=0}^{C-1} X^{\tau}_{\upsilon, \iota}\cdot K^{P}_{0, 0}+...+X^{\tau}_{\upsilon 
+ k_h -1, \iota + k_w -1}\cdot K^{P}_{k_h-1, k_w-1}
\end{equation*}
\begin{equation*}
= b^{P}_{i}+\sum_{\tau=0}^{C-1} \sum_{\varkappa=0}^{k_h} \sum_{j=0}^{k_w} X^{\tau}_{\upsilon + \varkappa, \iota + j}\cdot K^{P}_{\varkappa, j} = Conv_{i}(X, P, \upsilon, \iota)   
\end{equation*}

\noindent Thus, proving that it is possible to change the filters' shape and values of a \textit{Convolutional} layer without modifying its outputs.
\end{proof}

\noindent
However, an active verifier could notice the padding added to the filters, choosing to apply the verification on the non-padded kernel, undoing the obfuscation. Hence, below, we provide advanced layer obfuscation techniques for the \textit{Convolutional} layers to fool the active verifier. \\

\noindent \textbf{Advanced Convolutional Layer Obfuscation. }While introducing padding to the input channels of a \textit{Convolutional} layer would not typically raise suspicion of passive verifiers, as it is common practice in model architectures~\cite{he2016deep}. However, an active verifier may detect the presence of a zero border (or imperceptibly vanishing values) within the kernel. To counteract this potential detection, we develop a new formulation that introduces an additional perturbation to the kernel's border as depicted in Figure~\ref{fig:adv_conv}. The idea is to incorporate random noise into the values of the kernel's border, as shown in Figure \ref{fig:adv_conv}. As a result, we dynamically calculate distinct random noise, denoted as $\varepsilon$, for each filter $K^p \; \forall p \in P$, and add it to the kernel's border. 

\begin{equation}
\label{eq:conv_epsilon}
\begin{aligned}
&\forall p \in P,\; \varepsilon^p = \beta * median(|K^p|) * \mathcal{N}(\mu,\,\sigma^{2})\\
&\forall p \in P,\; \varepsilon^p = \beta * min(|K^p|) * \mathcal{N}(\mu,\,\sigma^{2}) 
\end{aligned}
\end{equation}

\noindent
We present two potential formulations for computing $\varepsilon$, one using the absolute median and the other using the absolute minimum. When the absolute median is chosen, it results in a greater decrease in utility compared to using the absolute minimum. Consequently, employing the absolute median leads to more pronounced changes in the statistical characteristics of the \textit{Convolutional} layer than when using the absolute minimum. In our evaluation, we employed $\mu = 0.33$, $\sigma = 0.1$, and fixed scaling of $\beta = 10$. Furthermore, to blend and integrate the values of $\varepsilon$ with the original filters $K$, we introduce an additional random factor $\lambda$ that multiplies the matrices of the new kernel as well as the bias. This adjustment elevates the values of the border, exceeding the extremely subtle vanishing values that an active verifier might erroneously perceive as zeros. Following these supplementary obfuscation steps, Equation \ref{eq:conv_expanded} transforms into:

\begin{equation}
\label{eq:conv_lambda}
\begin{aligned}
Conv^{'}_{i}(X, P, \upsilon, \iota) &= \lambda b^{P}_{i}+\sum_{\tau=0}^{C-1} \sum_{\varkappa=-1}^{k_h+1} \sum_{j=-1}^{k_w+1} X^{\tau}_{\upsilon + \varkappa, \iota + j}\cdot \lambda K^{P}_{\varkappa, j}\\ 
& = \lambda b^{P}_{i}+\lambda \sum_{\tau=0}^{C-1} \sum_{\varkappa=-1}^{k_h+1} \sum_{j=-1}^{k_w+1} X^{\tau}_{\upsilon + \varkappa, \iota + j}\cdot K^{P}_{\varkappa, j}
\end{aligned}
\end{equation}

\noindent
To prevent model fluctuations that can occur due to the random boosting of $\lambda$ in $Conv^{'}_{i}$, we divide the subsequent $Conv_{i+1}$ kernel's matrices by $\lambda$ again. Thus, following the flow of the new obfuscated model, $Conv^{'}_{i}$ and $Conv_{i+1}$ becomes:

\begin{equation}
\label{eq:conv_varepsilon}
\begin{aligned}
Conv^{'}_{i}(X, P, \upsilon, \iota) = \lambda b^{P}_{i}+\lambda \sum_{\tau=0}^{C-1} \sum_{\varkappa=-1}^{k_h+1} \sum_{j=-1}^{k_w+1} X^{\tau}_{\upsilon + \varkappa, \iota + j}\cdot K^{P}_{\varkappa, j}\\
Conv_{i+1}(X, P, \upsilon, \iota) = b^{P}_{i+1}+\frac{1}{\lambda} \sum_{\tau=0}^{C-1} \sum_{\varkappa=0}^{k_h} \sum_{j=0}^{k_w} X^{\tau}_{\upsilon + \varkappa, \iota + j}\cdot K^{P}_{\varkappa, j}
\end{aligned}
\end{equation}

\noindent
If the subsequent layer is \textit{Linear}, we can employ the base obfuscation technique as shown above to obtain $HH^{-1} = \frac{1}{\lambda} \cdot I$. As the random noise $\varepsilon$ is added to the layers, we cannot guarantee the soundness of this advanced obfuscation approach. Nonetheless, as shown in Section \ref{sec:eval}, while the obfuscation incurs some utility loss, the verification of all watermarking is compromised.

\begin{figure}[t]
    \centering
    \begin{subfigure}[b]{0.2\textwidth}
         \centering
         \includegraphics[width=\textwidth, trim=0.2cm 0.4cm 0.2cm 0]{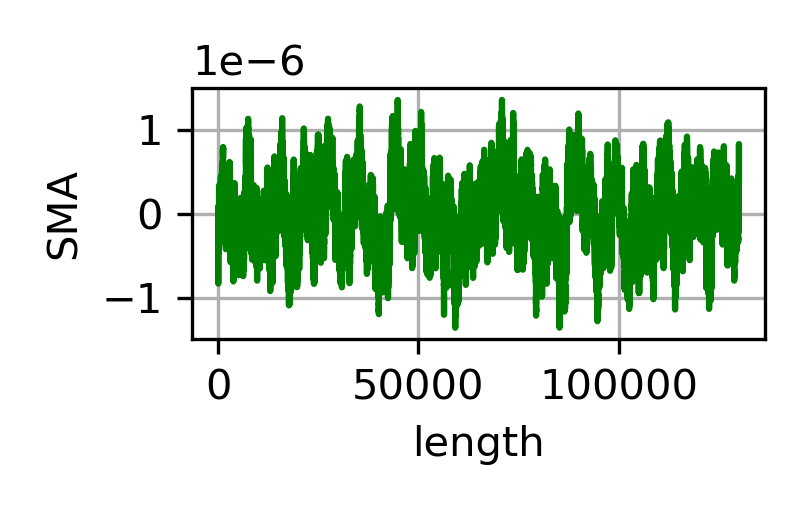}
         \caption{}
  	\label{fig:sma5_17}
     \end{subfigure}
    \begin{subfigure}[b]{0.2\textwidth}
         \centering
         \includegraphics[width=\textwidth, trim=0.2cm 0.4cm 0.2cm 0]{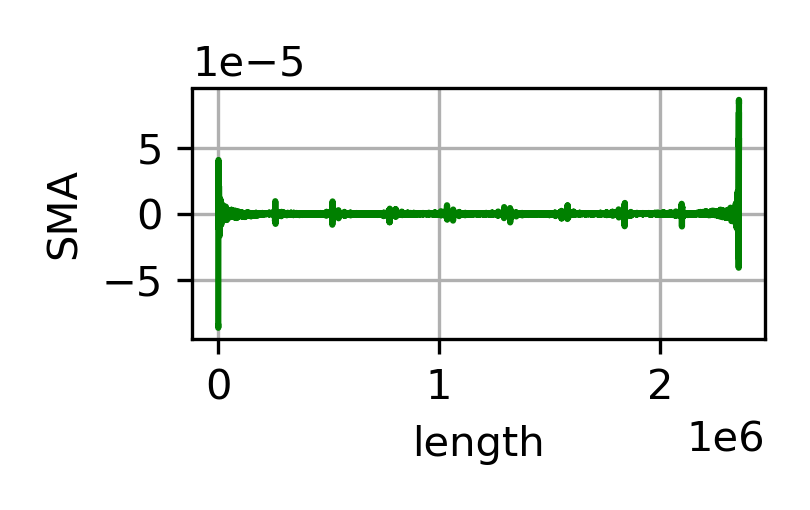}
         \caption{}
         \label{fig:sma5_18}
    \end{subfigure}
    \begin{subfigure}[b]{0.2\textwidth}
         \centering
         \includegraphics[width=\textwidth, trim=0.2cm 0.4cm 0.2cm 0]{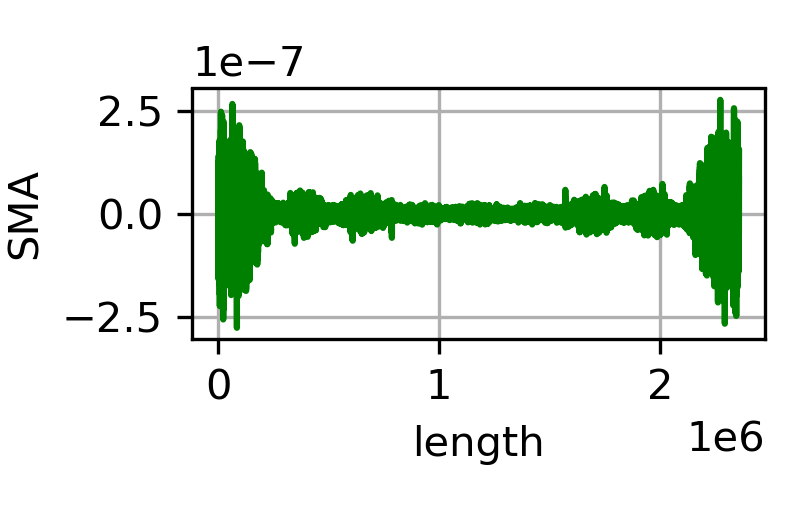}
         \caption{}
         \label{fig:sma5_19}
     \end{subfigure}
    
     \caption{SMA analysis of \textit{Convolutional} watermarked Layer belonging to the ResNet-18 (b). Subfigures (a) and (c) represent the SMA analysis of the non-watermarked layers before and after (b).}
    \label{fig:sma5}
\end{figure}

\begin{figure}[tb]
     \centering
      \begin{subfigure}[b]{0.2\textwidth}
         \centering
         \includegraphics[width=\textwidth, trim=0.2cm 0.4cm 0.2cm 0]{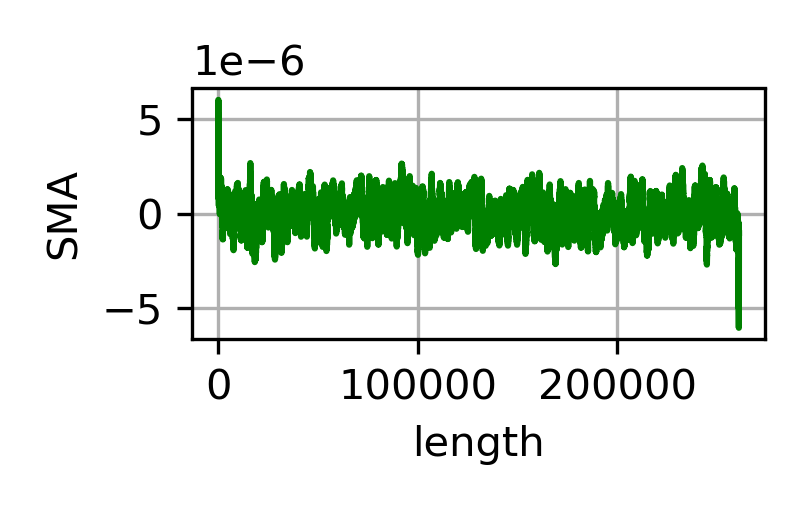}
         \caption{non-watermarked}
         \label{fig:wp_pmr_impact}
     \end{subfigure}
     \begin{subfigure}[b]{0.2\textwidth}
         \centering
         \includegraphics[width=\textwidth, trim=0.2cm 0.4cm 0.2cm 0]{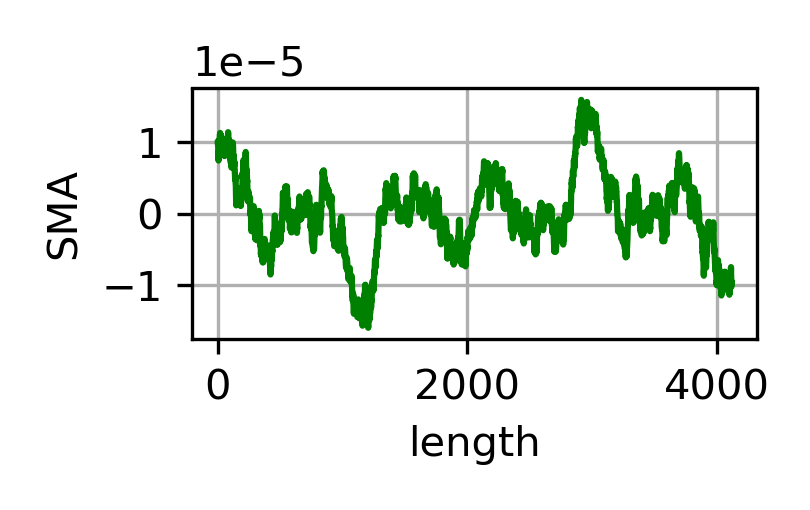}
         \caption{watermarked}
         \label{fig:watermarked}
     \end{subfigure}
        \caption{SMA analysis of \textit{Linear} watermarked and non-watermarked Layers.}
        \label{fig:linear}
\end{figure}

\section{Evaluation}
\label{sec:eval}

In this section, we demonstrate that our detection and evaluation can prevent the verification of watermarking signatures without impairing the performance of the original NN.

\subsection{Experimental Setup}

All the experiments were conducted on a server running Debian 11, with 1 TB of memory, an AMD EPYC 7742 processor with 64 physical cores and 128 threads, and 4 NVIDIA Quadro RTX 8000. We leveraged Pytorch~\cite{paszke2019pytorch} to implement the attack.
For the defenses, the environment used was suggested by their authors when available.

\noindent \textbf{Datasets. }For conducting the experiments, we used the well-known CIFAR-10~\cite{krizhevsky2009learning} and MNIST~\cite{lecun1998gradient} dataset. The CIFAR-10 dataset consists of small images ($32\times32$) of objects or animals. It includes 50k images for training and 10k for testing, depicting objects from 10 categories. The MNIST dataset is a collection of 70k grayscale images, each ($28\times28$) pixels in size, representing handwritten digits ($0-9$).

\noindent \textbf{Models. }The considered models for the evaluation include ResNet-18~\cite{he2015deep} for the CIFAR-10 dataset and a Convolutional Neural Network (CNN) for MNIST. The CNN starts with two Convolutional layers, with a 3x3 filter and padding of 1, each followed by one Max-Pooling layer of size $2\times2$ and stride of 1. Then, two Convolutional layers, with a 5x5 filter and padding of 0, each followed by one Max-Pooling layer of size $2\times2$ and stride of 1. Lastly, we added four fully connected layers with ReLU activation and another fully connected layer with Softmax activation for the classification task.
\noindent
The choice of datasets and models used for our evaluation is based on the existing and available watermarking defenses presented in the following Section.

\subsection{Evaluation Results}
\label{sub:eval_res}

In this section, we demonstrate how \ourname performs in defending against watermarking mechanisms, considering both the passive verifier (basic obfuscation) and the active verifier (advanced obfuscation). Our approach takes into account potential model modifications. We provide a detailed account of the frequency detection results, which play a pivotal role in distinguishing layers that may have been watermarked from those that have not. Then, we present the results of our approach when applied to each linear and convolutional layer under both obfuscation schemes. Finally, we delve into the discussion of how the utility of our approach may be impacted when we employ advanced obfuscation on the entire model.  

\subsubsection{Frequency Detection}
\label{subsub:freq}
This section demonstrates how the frequency components differ between watermarked and non-watermarked layer weights, as shown in Figure \ref{fig:linear} for the \textit{Linear} layers and Figure \ref{fig:sma5} for the \textit{Convolutional} layers. Each point on the plots represents the average of the model weight's frequency within a 1000-point window.
\noindent For \textit{Convolutional} layer analysis, we watermarked different layers of ResNet-18 and analyzed the average (SMA) frequency patterns of the \textit{Convolutional} layers.

\noindent
For \textit{Linear} layer analysis, we tested watermarking on one or more \textit{Linear} layers in the CNN. We concluded that, for the \textit{Linear} (Figure \ref{fig:linear}), the watermarked layers exhibit a higher degree of stability in the average change of frequency values, with fewer fluctuations compared to the weights of the non-watermarked layers. Similarly, we observed less volatility for the SMA values for the \textit{Convolutional} layers, except for the sudden high jumps in values at the extreme points. Thus, we were able to discriminate the watermarked layers from the non-watermarked layers.

\begin{table*}[t]
\centering
\scalebox{0.72}{
\begin{tabular}{c|cc|cc|cc|cc|cc|cc|cc|cc|}
          & \multicolumn{2}{c|}{DeepMarks~\cite{chen2019deepmarks}} & \multicolumn{2}{c|}{DeepSigns~\cite{darvish2019deepsigns}} & \multicolumn{2}{c|}{Feng \etal~\cite{feng2020watermarking}} & \multicolumn{2}{c|}{Kuribayashi \etal~\cite{kuribayashi2021white}} & \multicolumn{2}{c|}{DeepMark~\cite{xie2021deepmark}} & \multicolumn{2}{c|}{NeuNAC~\cite{botta2021neunac}} & \multicolumn{2}{c|}{Sablayrolles \etal~\cite{sablayrolles2020radioactive}} & \multicolumn{2}{c|}{DeepAttest~\cite{chen2019deepattest}} \\
          & Utility      & Conf.     & Utility      & Conf.     & Utility       & Conf.      & Utility          & Conf.          & Utility     & Conf.     & Utility    & Conf.    & Utility           & Conf.          & Utility      & Conf.      \\ \hline
No Obf.   & 99.7         & 100            & 96.9         & 100            & 91.4          & 98.4            & 98.3             & 98.5                & 99.1        & 100            & 99.6       & 100           & 94.7              & 99.3                & 99.5         & 100             \\
Base Obf  & 99.7         & N.S.
& 96.9         & N.S.
& 91.4          & 42.1            & 98.3             & 19.1                & 99.1        & N.S.
& 99.6       & 51.2          & 94.7              & N.S.                & 99.5         & N.S.
\\
Adv. Obf. & 98.8         & 37.5
& 96.4         & 43.8
& 88.7          & 44.6            & 98.1             & 23.5                & 98.4        & 41.3           & 97.8       & 48.1          & 93.6              & 45.0
& 98.7         & 37.8           
\end{tabular}}
\caption{Evaluation results of the watermarking schemes, based on the MNIST dataset, when the watermark is embedded in the \textit{Linear} Layer. Confidence (Conf.) represents the probability that the extracted signature matches the one embedded by the real owner. The Utility of the CNN before watermarking was 99.8\%. All the values in percentage (\%).}
\label{tab:mnist_evaluation}
\end{table*}
\begin{table*}[t]
\centering
\scalebox{0.63}{
\begin{tabular}{c|cc|cc|cc|cc|cc|cc|cc|cc|cc|cc|}
          & \multicolumn{2}{c|}{Uchida \etal~\cite{Uchida_2017}} & \multicolumn{2}{c|}{DeepMarks~\cite{chen2019deepmarks}} & \multicolumn{2}{c|}{Feng \etal~\cite{feng2020watermarking}} & \multicolumn{2}{c|}{Guan \etal~\cite{guan2020reversible}} & \multicolumn{2}{c|}{DeepMark~\cite{xie2021deepmark}} & \multicolumn{2}{c|}{NeuNAC~\cite{botta2021neunac}} & \multicolumn{2}{c|}{Liu \etal~\cite{liu2021watermarking}} & \multicolumn{2}{c|}{Lottery~\cite{chen2021you}} & \multicolumn{2}{c|}{RIGA~\cite{wang2021riga}} & \multicolumn{2}{c|}{DeepAttest~\cite{chen2019deepattest}} \\
          & Utility       & Conf.       & Utility         & Conf.         & Utility       & Conf.      & Utility      & Conf.      & Utility        & Conf.        & Utility       & Conf.       & Utility          & Conf.         & Utility        & Conf.        & Utility       & Conf.      & Utility         & Conf.         \\ \hline
No Obf.   & 91.9          & 99.9        & 91.8            & 100          & 91.4          & 99.6       & 91.1         & 89.7       & 87.7           & 100          & 93.6          & 95.4        & 93.3             & 100           & 93.7           & 100          & 89.4          & 100        & 92.3            & 100           \\
Base Obf  & 91.9          & N.S.        & 91.8            & N.S.         & 91.4          & 43.1       & 91.1         & 36.6       & 87.7           & 42.3         & 93.6          & N.S
& 93.3             & 49.6          & 93.7           & N.S.
& 89.4          & N.S.       & 92.3            & N.S.          \\
Adv. Obf. & 89.8          & 41.3        & 88.7            & 34.2         & 90.9          & 35.7       & 87.4         & 24.9       & 85.6           & 40.7         & 91.4          & 43.0          & 93.1             & 47.9          & 92.8           & 31.3         & 86.7          & N.S.       & 90.7            & N.S.         
\end{tabular}
}
\caption{Evaluation results of the watermarking scheme, based on the CIFAR-10 dataset, with watermarking on the \textit{Convolutional} Layer. Confidence (Conf.) represent the probability that the extracted signature match the one embedded by the real owner. The Utility of the ResNet-18 before watermarking was 94.2\%. All the values in percentage (\%).}
\label{tab:cifar_eval}
\end{table*}
\subsubsection{\ourname Attack on Linear Layers}
\label{subsub:linear}
To evaluate our approach against watermark defenses that embed their signature in the \textit{Linear} layers, we will consider a CNN trained on MNIST~\cite{lecun1998gradient}. Following Table~\ref{tab:mnist_evaluation}, we outline the drop in watermarking verification's accuracy using \ourname's base and advanced obfuscations, or if the signature extraction is completely prevented, e.g., due to incompatible matrix multiplication, the verifier cannot proceed (marked as no signature, $N.S.$). To simplify the evaluation process, we categorize the evaluated watermark frameworks on how their approaches extract the secret signature during the passive verification and how they attempt to bypass the obfuscation during active verification.

\noindent \textbf{Weight-Based Watermarks. }This category of watermarks~\cite{chen2019deepmarks, botta2021neunac, chen2019deepattest} embeds the secret message into the weights of $F_i$ (Section \ref{sec:background} and \ref{subsec:detail_design}). During passive verification, they use a secret matrix $S \in \mathbb{R}^{n \times n}$ (or a secret vector $\overrightarrow{w} \in \mathbb{R}^{n}$), kept by the model owner, to obtain the secret message by multiplying the selected layer weights $A_{i}$ (or statistical features of $A_{i}$) with this secret: $A_{i} \times S$ or $A_{i} \times \overrightarrow{w}$. However, as shown in Figure~\ref{fig:base_linear}, \ourname transforms $F_i$ into $F^{1}_i$. Therefore, when the passive verifier tries to multiply ($A_{i} (m \times n) \times H (n \times h)$) by $S$ (or $\overrightarrow{w}$), the dimensions will be incompatible ($h \neq n$), failing the extraction. In the active verifier setting, as illustrated in Figure~\ref{fig:adv_linear} and in Section \ref{subsec:detail_design}, we assume that the active verifier tries to reshape the weights of $F^{1}_i$ or $F^{2}_i \times F_{i+1}$ to match the dimensions of $S$ (or $\overrightarrow{w}$). While it is possible, however, the content of both layers has no relation to the original $A_{i}$ (as well as different statistical features). Therefore, as shown in Table~\ref{tab:mnist_evaluation}, the resulting watermarking confidence will be akin to a random guess.

\noindent \textbf{Weight-Selection Watermarks. }This category of watermarks~\cite{feng2020watermarking, kuribayashi2021white} chooses values at specific positions in the layer $F_i$. Thus, during verification, the passive verifier searches for those values in the same positions as in $A_{i}$ and calculates the secret message from the vector composed of those values. Since \ourname transforms the layer $F_i$ into $F^{1}_i$ with dimension $\mathbb{R}^{m \times h}$ and also updated its weight matrix with random values, which has no direct relation to the original $A_{i}$; therefore, when the passive verifier tries to locate the watermarked elements, it will extract random weights, obtaining low watermarking confidence that disproves the model ownership. When we assume an active verifier who try to multiply $F^{1}_i$ with $F^{2}_{i}$ to obtain the original $F_i$, we show that \ourname advanced obfuscation can counteract this reversing by replacing $F^{2}_{i}$ with $F^{2}_{i} \times F_{i+1}$, as detailed in Section \ref{subsec:detail_design}. When the active verifier extracts the signature weights again, they will observe that it has no relation to the original $A_{i}$ (as well as different statistical features). Therefore, as shown in Table~\ref{tab:mnist_evaluation}, the resulting watermarking confidence drops to a random guess.

\noindent \textbf{Activation-Based Watermarks. }This category of watermarks~\cite{darvish2019deepsigns, sablayrolles2020radioactive, xie2021deepmark} evaluates the activation maps of selected layer $F_i$ for a hidden dataset of selected inputs. To extract the signature from the output of $F_i$, the passive verifier generates a secret transformation matrix $S \in \mathbb{R}^{n \times n}$ that is dimensionally compatible with $A_{i}$ ($\mathbb{R}^{m \times n}$). But as shown in Figure~\ref{fig:base_linear}, \ourname transforms $F_i$ into $F^{1}_i$ into a new layer with weights $A_i \times H \in \mathbb{R}^{m \times h}$, leading the secret matrix $S$ to be dimensionally incompatible with $F^{1}_i$ ($h \neq n$). Consequently, the passive verifier fails to extract the signature from the model. In the advanced scenario, we assume that the active verifier tries to reverse back $F_i$ from $F^{1}_i$ and $F^{2}_i$, but \ourname's advanced obfuscation preventively merges $F^{2}_i$ with $F_{i+1}$ into $F^{2}_i\times F_{i+1}$. Thus, even if the active verifier tries to reshape the outputs to match $S$, the activation maps of both layers will still contain enough noise, lowering the watermark confidence, as shown in Table~\ref{tab:mnist_evaluation}.

\subsubsection{\ourname Attack on Convolutional Layers}
\label{subsub:convolutional}
To evaluate our approach against watermark defenses that embed their signature in the \textit{Convolutional} layers, we consider a ResNet-18~\cite{he2015deep} trained on CIFAR-10~\cite{krizhevsky2009learning}. In Table~\ref{tab:cifar_eval}, we outline the drop in watermarking verification's accuracy with the use of \ourname's base and advanced obfuscations, or if the signature extraction is completely prevented ($N.S.$). Again, we categorize the evaluated watermarks based on how their approaches extract the secret signature from the \textit{Convolutional} layers during the passive verification and how an active verifier would attempt to bypass the obfuscation during verification.

\noindent \textbf{Weight-Based Watermarks. }This category of watermarks~\cite{Uchida_2017, wang2021riga, chen2019deepmarks, botta2021neunac, chen2021you, chen2019deepattest} embeds the secret message into the weights of one or more \textit{Convolutional} layers. At verification time, the passive verifier extracts a vector $\overrightarrow{w} \in \mathbb{R}^{C\cdot k_h\cdot k_w}$ that is the flattened average of one or more $Conv_i \in \mathbb{R}^{P \times C\times k_h\times k_w}$ for each feature maps $p \in P$. Then, a signature is obtained by computing a similarity (Equation \ref{eq:verification}) between $\overrightarrow{w}$ (in some cases using trained DNN~\cite{wang2021riga}), and the owner's secret signature $M_o$. Since we expand the kernel, as illustrated in Figure~\ref{fig:base_conv}, \ourname forces the extraction of a signature of incorrect size $C\cdot k^{'}_{h}\cdot k^{'}_{w}$ that cannot be compared with the owner's signature, halting the verification. An active verifier could detect the padding in the kernel, but by carefully adding noise, as explained in Equation \ref{eq:conv_varepsilon}, \ourname prevents the verifier from determining the position of the original kernel to extract. The only option left for an active verifier, in order to obtain a functional $M_e$, is to reshape the kernels to match them with the owner's signature size. However, this operation does not help the active verifier, given that, as shown in Table~ \ref{tab:cifar_eval}, the computed watermarking confidence with the reshaped kernels is still too low to justify a claim of ownership.

\noindent \textbf{Activation-Based Watermark. }This category of watermarks~\cite{guan2020reversible, liu2021watermarking, xie2021deepmark} evaluates the resulting feature maps for a hidden dataset of selected inputs given to the selected \textit{Convolutional} layers $Conv_i$. To extract the signature, the passive verifier generates a secret matrix $S$ that is dimensionally compatible with the feature map of the \textit{Convolutional} layer. During the verification, \ourname base obfuscation changes the statistical values that the defense is expected to obtain, reaching watermark confidence that cannot support the claim of ownership. In the active verifier scenario, the original kernels remain hidden during the inspection, producing again a lower watermark confidence than the base obfuscation case, as shown in Table~\ref{tab:mnist_evaluation}. 

\begin{table}[!t]
\centering
\scalebox{0.60}{
\begin{tabular}{clccc}
\multicolumn{1}{c}{\textbf{Category}} & \textbf{Attack}  & \textbf{Data}             & \textbf{Retraining}           & \textbf{Model buildout} \\ \hline
\multirow{14}{*}{\textbf{\begin{tabular}[c]{@{}c@{}}Model \\ Modification\end{tabular}}}  & Adversarial Training~\cite{madry2019deep} &\checkmark & \checkmark &  $\times$ \\
                                      & Fine-Tuning~\cite{Uchida_2017} & \checkmark & \checkmark & $\times$ \\
                                      & Feature Permutation~\cite{lukas2022sok} &  \checkmark & \checkmark & $\times$ \\
                                      & Fine Pruning~\cite{liu2018finepruning} &\checkmark & \checkmark & $\times$ \\
                                      & Overwriting~\cite{Uchida_2017} & \checkmark & \checkmark & $\times$ \\
                                      & Regularization~\cite{regularization_shaf} & \checkmark & \checkmark & $\times$ \\
                                      & Label Smoothing~\cite{Rethinking_LabelSmoothing} & \checkmark & \checkmark & $\times$ \\
                                      & Neural Structure Obfuscation~\cite{yan2023rethinking} & $\times$ & $\times$ & \checkmark \\ 
                                      & Neural Cleanse~\cite{wang2019neural} & \checkmark & \checkmark &  \checkmark \\
                                      & Neural Laundering~\cite{aiken2020neural} & \checkmark & \checkmark & \checkmark \\
                                      & Weight Pruning~\cite{zhu2017prune} & \checkmark & \checkmark & $\times$ \\
                                      & Weight Shifting~\cite{lukas2022sok} & \checkmark & \checkmark & $\times$ \\
                                      & Weight Quantization~\cite{hubara2016quantized} & \checkmark & \checkmark & $\times$ \\
                                      & \textbf{\ourname} & $\times$ & $\times$ & $\times$ \\\hline
\multicolumn{1}{c}{\textbf{\begin{tabular}[c]{@{}c@{}}Model \\ Extraction\end{tabular}}}   & Distillation~\cite{hinton2015distilling} & \checkmark & \checkmark & \checkmark \\ 
\end{tabular}
}

\caption{"Data" represents attacks that require training data, "Retraining" for Training and Fine-Tuning, and "Model Buildout" if neurons are modified, a new model is created.} 

\label{tb:rw_whitebox}
\end{table}

\subsubsection{Runtime Evaluation}
\label{subsub:runtime}
To evaluate the applicability of \ourname in real-case scenarios, we evaluated the increase in complexity and runtime execution of obfuscated models with regard to the original stolen models. Table~\ref{tab:runtime} shows that our approach does not significantly increase the execution time after the injection of the watermarking. Instead, for the advanced obfuscation, the multiplication of layers by a smaller $H$ reduces the execution time.

\begin{table}
\centering
\scalebox{0.80}{
\begin{tabular}{l|c|c|}
\multicolumn{1}{c|}{Runtime Execution}              & ResNet-18 & CNN \\ \hline
Non-Watermarked Model           & 1.09 s         & 0.48 s \\
Watermarked Model           & 1.24 s        & 0.56 s \\
Base Obfuscation        & 1.31 s       & 0.68 s  \\
Advanced Obfuscation        & 1.21 s        & 0.51 s  
\end{tabular}
}
\caption{Average execution time across multiple runs, using different seeds over the test partition of the respective datasets.}
\label{tab:runtime}
\end{table}
\section{Related Works}
\label{sec:related_works}

This section provides an overview of the removal attacks on white-box watermarking schemes. The attacks can be subdivided into the following categories (Lukas N. \etal~\cite{lukas2022sok}): (i) input preprocessing, (ii) model modification, and (iii) model extraction. We will present only watermarking schemes that check the model composition.

\noindent \textbf{Model Modification. }In this category, the adversary first transforms a model to introduce vulnerabilities that can be further exploited using various techniques such as fine-tuning, pruning and more~\cite{madry2019deep, shafieinejad2021robustness, Rethinking_LabelSmoothing, Uchida_2017, liu2018finepruning, zhu2017prune, lukas2022sok, hubara2016quantized, wang2019neural, aiken2020neural, yan2023rethinking}.

\noindent
In~\cite{madry2019deep}, they first select a subset of the dataset, which they modify using Projected Gradient Descent (PGD), causing disruption to the watermark. Subsequently, they fine-tuned the model, utilizing the true labels for the adversarial examples, to ensure that the model's predictions aligned more closely with the underlying data. 

\noindent
In the case of the regularization attack~\cite{shafieinejad2021robustness}, the authors initially applied a regularization technique, such as L2, to the model. They then fine-tuned the model to discover a new optimal local minimum, aiming to reduce the drop in accuracy caused by the initial regularization step. Due to the modifications made to the model's weights, the watermark has been removed. 
Szegedy \etal~\cite{Rethinking_LabelSmoothing} introduced Label Smoothing, a method where uncertainty was introduced during training. This was achieved by combining one-hot encoded or predicted labels with a uniform distribution over all possible labels. This approach served to regularize the model, treating the ground-truth class label and other potential labels equally likely. Because the regularization process influenced the model's layers, the resultant modifications rendered the watermark less detectable.

\noindent
In~\cite{Uchida_2017}, the authors employed fine-tuning to modify the model's weights and introduced various fine-tuning approaches aimed at removing the watermark from the model. These approaches include (i) fine-tuning all layers, (ii) fine-tuning only the last layer after freezing the preceding layers, (iii) retraining all layers by re-initializing the weights of the last layers and then fine-tuning all weights, and (iv) retraining only the last layer. Through the fine-tuning process, the layers containing the watermark are updated, resulting in a failure during the watermark verification. They also presented another approach, which embeds a watermark using the same watermarking scheme but a different watermarking key, thus overwriting the previous one. 

\noindent
In~\cite{liu2018finepruning}, the authors introduced the fine-pruning method for removing the watermark, which involves a combination of fine-tuning and pruning to reduce the effectiveness of the watermarking, thus eliminating it. Initially, the pruning step entailed setting the activation of neurons with low activation for benign samples to zero. Then, the model was fine-tuned to mitigate the drop in test accuracy. Considering the impact of these two approaches independently on watermark removal, their combination also led to the failure of the watermark. 

\noindent
Another set of methods focuses on altering model weights to eliminate or obscure watermarked layers. For instance, Zhu M. \etal~\cite{zhu2017prune} employ the weight pruning technique, which involves randomly pruning weights in the model until a specific sparsity level is achieved within each layer. In another study~\cite{lukas2022sok}, authors introduced the weight shifting technique, where they apply minor perturbations to the filters of convolutional networks. Subsequently, the model is fine-tuned to enhance test accuracy.

\noindent
In the Weight Quantization approach~\cite{hubara2016quantized}, the authors utilize $b$ bits to quantize the model's weights. These weights are then divided into $2^b$ equally spaced segments, with each weight parameter approximated by the central value of its respective segment. All of these aforementioned approaches, ~\cite{zhu2017prune, lukas2022sok, hubara2016quantized} operate on the assumption that, in typical scenarios, watermarks are embedded within the model weights or architecture. Therefore, modifying these components weakens the watermarking. 
Moreover, Lukas N. \etal~\cite{lukas2022sok} presented the feature permutation technique, where neurons in a hidden layer are randomly rearranged without affecting the model's functionality. This is possible due to the feature invariance of deep neural networks. Altering the positions of the weights within the neural network also has an impact on the watermark, as it directly affects the layer(s) containing it. 

\noindent
Neural Cleanse~\cite{wang2019neural} and Neural Laundering~\cite{aiken2020neural} are approaches based on reverse engineering. Firstly, the watermarking is removed by reverse-engineering the watermarking trigger from the model, eliminating the trigger. After this process, the trigger can be unlearned by fine-tuning it on different labels or by iteratively pruning the most active neurons in some layers. The distinction between Neural Cleanse and Neural Laundering is that the first was developed for removing backdoors, while the latter used the same approach to remove the watermark. 

\noindent
Yifan Y. \etal~\cite{yan2023rethinking}, presented a Neural Structural Obfuscation approach, which uses dummy neurons to perform neural structural obfuscation, automatically generating and injecting dummy neurons inside the target model to reduce the success of watermark verification. While this approach works against the passive verifier, this approach will not work against an active verifier as described in Section~\ref{sec:threat}. This active verifier could easily detect the added dummy neuron by comparing the weights of the original model and the obfuscated one, as described in Section~\ref{sec:design}.

\noindent \textbf{Model Extraction. }In this watermark removal scheme, the knowledge from the source model is transferred into a surrogate model. However, this category lies in the black-box domain except for the Distillation\cite{hinton2015distilling}, which introduces model compression by transferring knowledge from a larger teacher neural network to a smaller student network.

\noindent All the approaches discussed above have certain limitations, primarily based on their demands for data and resources to facilitate training, fine-tuning, or modification to remove the watermark embedded in the image or the model. For instance, in the case of Model Modification, the model needs to be modified to perform the necessary modification, leading to a need for input data and resources to perform the fine-tuning to compensate for the drop in accuracy. Lastly, Model Extraction employs a distillation process wherein the teacher model must be reduced to a student model during the training. This, again, requires the availability of suitable data and computational resources. Table \ref{tb:rw_whitebox} outlines these reported watermark removal attacks, detailing the limitations regarding the prior information they need (data and fine-tuning or retraining) to function effectively.
\section{Security Considerations}
\label{sec:sec_analysis}

\noindent
This section considers the effectiveness of our obfuscation techniques, corroborating that \ourname can neutralize both the passive and active verifiers without requiring prior knowledge of the underlying watermarking schemes, additional data or training, and fine-tuning. To bypass our attack scheme, an active verifier has to ensure that they can either undo the obfuscation schemes or compute a different set of operations that can help them identify the modifications done on the stolen model. Thus succeeding in verifying the watermark. Below, we present the schemes through which a verifier can determine that the given model is indeed a stolen copy of the original model: i) Keep track of the original layer's weight values and their positions in the $\mathbb{R}^{m \times n}$ matrix, ii) Train and deploy an extra deep neural network (DNN) to extract the hidden message from the layer's weights~\cite{wang2021riga}, iii) Analyze the activation maps of the layer for specific input instances~\cite{darvish2019deepsigns, guan2020reversible}, or iv) Perform an extra set of computations to compare the original and the assumed stolen model. Thus, an advanced verifier with knowledge of the stolen model architecture attempts to compute $H$ from $A_i \times H$ using $A_i$ and then obtain the identity matrix from the subsequent $H^{-1}$.

\noindent Thus, the adversary, when attacking the watermarking approaches, can alter the weight matrix or alter the layer's output. \ourname encompasses these attack methods through two obfuscation schemes presented in Section \ref{sec:design}. These schemes reshape the entire watermarked area of the model by splitting and expanding using $I_{n \times n}$ for Base Linear Layer Obfuscation, merging $F^{2}{i}$ with layer $F^{1}{i+1}$ for Advanced Linear Layer Obfuscation, modifying the kernel shape ($k_h \times k_w$) and size using padding for Base Convolutional Layer Obfuscation, and introducing noise $\lambda,\; \varepsilon$ for further alteration in the Advanced Convolutional Layer Obfuscation. Against an advanced verifier that tries to obtain $H$ from $A_i \times H$, the adversary can expand the model architecture by incorporating new layers between $A_i \times H$ and $H^{-1}$ using the obfuscation schemes presented in Section \ref{sec:design}, and, if the adversary does not want to introduce noise, by cleverly introducing another $H^{-1}$ in the middle of the expansion it can deceive the verifier into thinking it has obtained the identity matrix before the expected position.

\noindent
We also empirically verified the effectiveness of \ourname in Section \ref{sec:eval}. The results reveal that \ourname effectively disrupts multiple such schemes, reducing watermark detection to the level of random guessing while maintaining model accuracy. Therefore, \ourname is robust against both passive and active verifiers. 
\section{Conclusion}
\label{sec:conc}

This paper presents \ourname, a unified white-box watermark obfuscation framework. Unlike existing methods, it offers both basic and advanced obfuscation techniques, catering to different scenarios. The basic method targets passive verifiers adhering to standard protocols, while the advanced method addresses situations where active verifiers have access to the entire model and can perform additional computations. Extensive evaluations were conducted, testing \ourname against various white-box watermarking methods. The results reveal that \ourname effectively disrupts multiple schemes discussed in the paper, reducing watermark detection to the level of random guessing while maintaining model accuracy.

\section*{Acknowledgment}

\noindent This research received funding from Intel through the Private AI Collaborate Research Institute (https://www.private-ai.org/), the Hessian Ministry of Interior and Sport as part of the F-LION project, OpenS3 Lab, and the Horizon program of the European Union under the grant agreement No. 101093126 (ACES).

\bibliographystyle{plain}
\bibliography{reference}

\appendix

\section{Algorithmic Implementation}
Algorithmic implementation of \ourname: starting from Algorithm~\ref{alg:base}, if the SMA detection algorithm has not found any potential layer, we apply the obfuscation to the whole model (line 2); otherwise, for each watermarked detected layer, we apply the specific obfuscation based on its type and accordingly to the kind of verification that is expected. The function AdvConvObfuscation (Algorithm~\ref{alg:advconv}) calculates an $\epsilon$ value for each feature of the layer's feature map and uses it for padding; it then multiplies the whole layer by a random value and divides the subsequent one by the same value.
The function BaseConvObfuscation (Algorithm~\ref{alg:baseconv}) pads each feature map of the kernel with zeros.
The function AdvLinearObfuscation (Algorithm~\ref{alg:advlinear}) calculates the matrices $H$ and $H^{-1}$ based on the dimension of the watermarked layer and the subsequent one; it then multiplies the original layer's weight by $H$ and the subsequent layer by $H^{-1}$. Lastly, function BaseLinearObfuscation (Algorithm~\ref{alg:baselinear}) calculates the matrices $H$ and $H^{-1}$ based on the dimension of the watermarked layer; it then expands the model on that specific layer while keeping the previous and subsequent layers untouched.

\begin{algorithm}
\caption{AdvLinearObfuscation($M$,  $layer$)\\
\textbf{Input:} \\
$\hphantom{xxxx}M$ Model to obfuscate\\
$\hphantom{xxxx}layer$ position of the Linear layer\\
\textbf{Output:} \\
$\hphantom{xxxx}M$ Model with advanced obfuscation of Linear layer}
\label{alg:advlinear}
\begin{algorithmic}[1]

\STATE $H, H^{-1} \leftarrow RandomMatrix(M[layer], M[layer+1])$
\STATE $M[layer] \leftarrow M[layer] \times H$
\STATE $M[layer+1] \leftarrow M[layer+1] \times H^{-1}$

\RETURN $M$
\end{algorithmic}
\end{algorithm}

\begin{algorithm}
\caption{BaseLinearObfuscation($M$,  $layer$)\\
\textbf{Input:} \\
$\hphantom{xxxx}M$ Model to obfuscate\\
$\hphantom{xxxx}layer$ position of the Linear layer\\
\textbf{Output:} \\
$\hphantom{xxxx}M'$ Model with base obfuscation of Linear layer}
\label{alg:baselinear}
\begin{algorithmic}[1]

\STATE $H, H^{-1} \leftarrow RandomMatrix(M[layer])$

\STATE $M', newLayer \leftarrow expandModel(M, layer) $
\STATE $M'[layer] \leftarrow M'[layer] \times H$
\STATE $M'[newLayer] \leftarrow H^{-1}$

\RETURN $M'$
\end{algorithmic}
\end{algorithm}

\begin{algorithm}
\caption{AdvConvObfuscation($M$,  $layer$)\\
\textbf{Input:} \\
$\hphantom{xxxx}M$ Model to obfuscate\\
$\hphantom{xxxx}layer$ position of the Conv. layer\\
\textbf{Output:} \\
$\hphantom{xxxx}M$ Model with advanced obfuscation of Conv. layer}
\label{alg:advconv}
\begin{algorithmic}[1]

\FOR{$p$ in $FeatureMap(M[layer])$}     
    \STATE $\epsilon \leftarrow \beta * min(|M[layer][p]|) * \mathcal{N}(\mu,\,\sigma^{2})$
    \STATE $M[layer][p] \leftarrow PadKernel(M[layer][p], \epsilon)$
\ENDFOR
\STATE $\lambda \leftarrow random()$
\STATE $M[layer] \leftarrow M[layer] * \lambda$
\STATE $M[layer + 1] \leftarrow M[layer + 1] / \lambda$

\RETURN $M$
\end{algorithmic}
\end{algorithm}

\begin{algorithm}
\caption{BaseConvObfuscation($M$,  $layer$)\\
\textbf{Input:} \\
$\hphantom{xxxx}M$ Model to obfuscate\\
$\hphantom{xxxx}layer$ position of the Conv. layer\\
\textbf{Output:} \\
$\hphantom{xxxx}M$ Model with base obfuscation of Conv. layer}
\label{alg:baseconv}
\begin{algorithmic}[1]

\FOR{$p$ in $FeatureMap(M[layer])$}     
    \STATE $M[layer][p] \leftarrow PadKernel(M[layer][p], 0)$
\ENDFOR

\RETURN $M$
\end{algorithmic}
\end{algorithm}

\begin{algorithm}
\caption{\ourname($M$,  $SMA$, $Adv$)\\
\textbf{Input:} \\
$\hphantom{xxxx}M$ original stolen Model\\
$\hphantom{xxxx}SMA$ detection's output (can be empty)\\
$\hphantom{xxxx}Adv$ if we apply advanced obfuscation\\
\textbf{Output:} \\
$\hphantom{xxxx}M'$ obfuscated Model}
\label{alg:base}
\begin{algorithmic}[1]
\IF{$SMA = \emptyset$}
    \STATE $SMA \leftarrow \{layer\;|\;layer \in M\}$
\ENDIF

\STATE $M' \leftarrow copy(M)$

\FOR{$layer$ in $SMA$}
    \IF{$layer$ is Convolutional}        
        \IF{$Adv$ is True}
            \STATE $M'[layer] \leftarrow AdvConvObfuscation(M', layer)$
        \ELSE
            \STATE $M'[layer] \leftarrow BaseConvObfuscation(M', layer)$
        \ENDIF
    \ELSIF{$layer$ is Linear}        
        \IF{$Adv$ is True}
            \STATE $M'[layer] \leftarrow AdvLinearObfuscation(M', layer)$
        \ELSE
            \STATE $M'[layer] \leftarrow BaseLinearObfuscation(M', layer)$
        \ENDIF
    \ENDIF
\ENDFOR

\RETURN $M'$
\end{algorithmic}
\end{algorithm}

\end{document}